\documentclass[onecolumn,amsmath,amssymb,nofootinbib,superscriptaddress]{revtex4-2}
\usepackage{wasysym}
\usepackage[utf8]{inputenc}
\usepackage[margin=1in]{geometry}
\usepackage{amsmath, amsthm, amssymb,amscd, mathrsfs, amsfonts, mathtools,tikz-cd,pgfplots}
\usepackage{appendix}
\usepackage{relsize}
\usepackage{mathtools}
\usepackage{dsfont}
\usepackage{float}

\usepackage{bbding}
\usepackage{quantikz}
\usepackage[linesnumbered,ruled,vlined]{algorithm2e}
\usepackage{algpseudocode}
\usepackage{physics}
\usepackage{graphicx}
\usepackage{textcomp}
\SetKwInput{KwInput}{Input}                % Set the Input
\SetKwInput{KwOutput}{Output}              % set the Output
\usepackage{afterpage}
\usetikzlibrary{shapes, backgrounds}
\usepackage{soul}
\usepackage{makecell}
\usepackage{tabularx}
\newcolumntype{Y}{>{\centering\arraybackslash}X}
\tikzset{
  qaoaBlock/.style={
    draw,
    rounded corners=2pt,
    minimum width=1.4cm,
    minimum height=0.8cm,
    align=center,
    font=\small,
  },
  problem/.style={qaoaBlock, fill=green!30},
  mixer/.style={qaoaBlock, fill=blue!30},
}
\usetikzlibrary{decorations.pathmorphing}
\usetikzlibrary{arrows.meta}
\usetikzlibrary{patterns}
\usepackage{chngcntr}
\usepackage[bookmarks=true,
bookmarksnumbered=true,
breaklinks=true,
pdfstartview=FitH,
hyperfigures=false,
plainpages=false,
naturalnames=true,
colorlinks=true,
linkcolor=blue,       % Color for internal links
citecolor=blue,       % Color for citations
urlcolor=blue,        % Color for URLs
pagebackref=true,
pdfpagelabels]{hyperref}
\usepackage{ORCIDinREVTeX}
\theoremstyle{definition}
\newtheorem{thm}{Theorem}[section]
\newtheorem{prop}[thm]{Proposition}
\newtheorem{lem}[thm]{Lemma}

\newtheorem{defn}[thm]{Definition}
\newtheorem{cor}[thm]{Corollary}
\newtheorem{rmk}[thm]{Remark}

\newtheorem{ex}[thm]{Example}

\usepackage[normalem]{ulem} 

\begin{document}
\title{Efficient  Estimation of Reduced QAOA Expressibility on Acyclic Graphs}% \sout{Graph Structure Reveals Quantum Dynamics in Symmetry-Reduced QAOA on Trees.}}%: Theoretical Insights and Practical Implications}
%\title{Symmetry Principles for Variational Quantum Algorithms}
% 100 words: This work identifies a structural principle governing the expressivity–trainability trade-off in variational quantum algorithms. By linking instance symmetries to the dynamical Lie algebra that determines reachable quantum dynamics, we provably demonstrate that barren plateaus can be systematically controlled through symmetry-resolved transformations that preserve optimal solutions. We prove that modest graph modifications can enforce maximal reachable dynamics in reduced spaces and can qualitatively change scaling of circuit complexity. This establishes symmetry resolution as a general design paradigm for variational algorithms, providing scalable and actionable criteria for their design principles and lays the foundation for fundamentally new hybrid quantum-classical algorithmic constructions.

\author{Bao Bach} 
\orcid{0000-0001-6210-7725}
\email{baobach@udel.edu}
\affiliation{Department of Computer and Information Sciences, University of Delaware, Newark, DE 19716, USA} 
% \affiliation{Department of Quantum Science and Engineering, University of Delaware, Newark, DE 19716, USA} 

\author{Boris Tsvelikhovskiy} 
\orcid{0000-0003-0798-7218}
%\email{borist@ucr.edu}
\affiliation{Department of Mathematics, University of California, Riverside, CA 92521, USA}

\author{Jose Falla}
\orcid{0000-0001-9918-2198}
%\email{jfalla@udel.edu}
\affiliation{Department of Physics and Astronomy, University of Delaware, Newark, DE 19716, USA} 

\author{Ilya Safro} 
\orcid{0000-0001-6284-7408}
%\email{isafro@udel.edu}
\affiliation{Department of Computer and Information Sciences, University of Delaware, Newark, DE 19716, USA} 
\affiliation{Department of Physics and Astronomy, University of Delaware, Newark, DE 19716, USA}

\begin{abstract}

Classically equivalent formulations of an optimization problem need not lead to equivalent quantum algorithms. For MaxCut, fixing the value of a single vertex removes a simple global symmetry without changing the underlying optimization problem, yet it can substantially alter the quantum dynamics generated by the Quantum Approximate Optimization Algorithm (QAOA). These dynamics are captured by the circuit's dynamical Lie algebra (DLA), whose direct construction can become exponentially expensive. Here we show that, for symmetry-reduced QAOA on trees, substantial information about the reduced DLA can instead be obtained efficiently from the graph alone. We introduce a polynomial classical algorithm that recursively distinguishes vertices using shortest path structure and degree parity. When all vertices are resolved individually, the method determines the complete reduced DLA and, under the corresponding graph conditions, certifies maximal expressibility of the reduced QAOA ansatz. Even when full resolution is not achieved, the algorithm identifies embedded subalgebras, provides rigorous lower bounds on DLA dimension, and certifies controllable subsystems. Our theoretical and experimental results show how classical graph structure can be used to diagnose and potentially guide the quantum dynamics of variational algorithms before running them on quantum hardware.

\end{abstract}

\maketitle

\hspace{-0.17in}\textbf{Keywords:} Quantum Approximate Optimization Algorithm, Dynamical Lie Algebras, Combinatorial Optimization, MaxCut.

\section{Introduction}

How can we determine what a quantum circuit is actually capable of doing? For the Quantum Approximate Optimization Algorithm (QAOA), the circuit itself is remarkably simple: it alternates evolution under only two types of Hamiltonians, one encoding the optimization problem and one mixing candidate solutions. Yet these two generators can collectively produce very different quantum dynamics depending on the underlying problem structure. They may restrict the evolution to a relatively small part of the Hilbert space, or they may generate dynamics rich enough to reach essentially every state available to the system. One of the recent mathematical abstractions that has attracted broad attention in quantum information and captures this distinction is the \emph{dynamical Lie algebra} (DLA).

Let us focus first on the meaning of a DLA for quantum circuits. A parameterized variational ansatz, such as the QAOA, is constructed from a finite set of Hamiltonians that determine the instantaneous directions of state evolution. By applying the exponentials of these Hamiltonians repeatedly, in different orders and for different evolution times, one generates increasingly complicated transformations. The DLA precisely collects all skew-Hermitian operators generated by these basic building blocks, which in turn dictate the reachable transformations. It therefore tells us what portion of the quantum dynamics is fundamentally accessible to the circuit, independently of whether a particular classical optimizer succeeds in finding the right parameters. A sufficiently rich DLA can certify the ability to prepare arbitrary states in the relevant Hilbert space, whereas symmetries and other structural features of the problem instance can confine the dynamics to smaller invariant subspaces. At the same time, the dimension and structure of the DLA are closely related to the geometry of the variational loss landscape and to the possible emergence of barren plateaus~\cite{FHCKYHSP,LJGCC,LCSMCC,RBSKMLC}. Thus, the DLA provides information about two central questions in variational quantum computing: \emph{What can the circuit express?} and \emph{How difficult is it to train?}

Determining this algebra directly, however, can be extremely expensive. Even when a quantum circuit is generated by only a few simple Hamiltonians, repeated commutation can produce a DLA whose dimension grows exponentially with the number of qubits. \emph{We show that, for an important class of symmetry-reduced QAOA instances, substantial (and in favorable cases complete) information about this exponentially large quantum dynamical object can instead be obtained by a polynomial-time classical algorithm operating only on the underlying graph.} For a tree $\Gamma=(V,E)$ with a chosen reference vertex $v\in V$, our algorithm uses shortest-path distances to $v$ together with vertex-degree parities along unique paths to partition $V\setminus\{v\}$, yielding explicit subset-supported operator sums that belong to the reduced DLA. Its overall complexity is  $\mathcal{O}\bigl(|V|^2\bigr)$. When the resulting  partition fully separates the vertices, the algorithm determines the entire reduced DLA and, under the corresponding graph-theoretic conditions, certifies maximal expressivity of the reduced QAOA ansatz. When complete separation is not achieved, the outcome still identifies large subalgebras within the DLA, rigorous lower bounds on the DLA dimension, and controllable quantum subsystems.

The restriction to trees should not be interpreted as a limitation of the underlying principle. Trees provide a tractable setting in which classical graph structure can be efficiently connected to reduced quantum dynamics. Their unique paths give each vertex a single degree-parity profile relative to a source, whereas cycles can introduce multiple profiles and substantially complicate the Lie-algebraic analysis. Thus, the tree case provides a natural starting point for extending the approach to more general graphs.\\

\noindent {\bf QAOA: Expressivity vs Trainability~} 
This work is motivated by the broader challenge of understanding and improving QAOA~\cite{QAOA}, one of the most widely studied variational quantum algorithms for combinatorial optimization. In QAOA, a classical objective function is encoded into an Ising-type Hamiltonian whose low-energy states correspond to high-quality solutions of the optimization problem. The approach has been investigated for problems including MaxCut~\cite{QAOA}, community detection~\cite{shaydulin2019network}, graph coloring~\cite{graph_coloring}, and graph partitioning~\cite{ushijima2021multilevel}, among many others.

At its core, QAOA alternates between a \emph{problem Hamiltonian} $H_P$, which encodes the objective function, and a complementary \emph{mixer Hamiltonian} $H_M$, which drives transitions between candidate solutions. At circuit depth $p$, these alternating evolutions are controlled by two $p$-tuples of real variational parameters,
$\boldsymbol{\gamma}$ and $\boldsymbol{\beta}$, which are optimized classically. The quantum computer therefore explores a parameterized family of states, while the classical optimizer searches within this family for parameters producing a small expectation value of $H_P$.

This hybrid structure is both the strength of QAOA and one of its primary limitations. Increasing the circuit depth can enlarge the family of accessible states, but it simultaneously makes the classical parameter optimization more difficult. The resulting optimization landscapes are typically nonconvex, and the cost of finding good parameters can substantially reduce the potential quantum advantage~\cite{ZWCHL}. A particularly severe manifestation of this problem is the appearance of \emph{barren plateaus}, where gradients of the objective function become exponentially small with increasing system size, making conventional gradient-based optimization ineffective~\cite{mcclean2018barren,LTWS}. Consequently, one of the central challenges in QAOA is not merely to construct a highly expressive circuit, but to obtain a useful balance between \emph{expressivity} and \emph{trainability}.

A considerable body of research therefore focuses on improving the optimization of QAOA parameters. One direction is better parameter initialization. For example, the Beinit framework~\cite{beta} initializes variational parameters using a data-dependent beta distribution whose hyperparameters are inferred from the input instance and introduces controlled perturbations during gradient descent to reduce the probability of becoming trapped in flat regions of the loss landscape. Another important direction is parameter transferability. Optimal QAOA parameters have been observed to exhibit regularities associated with local graph structure, making it possible to reuse parameters obtained for smaller or structurally related instances~\cite{galda2023similarity,brandao2018fixed,falla2024graph,xu2025qaoa} or even across different combinatorial problems \cite{nguyen2025cross,nguyen2026graph,montanez2025transfer}. Such approaches can substantially reduce the classical optimization burden and provide useful initializations for larger problems. Other proposals include reinforcement-learning-based parameter initialization for deep circuits~\cite{peng2025breaking} and engineered Markovian dissipation between variational layers to improve trainability~\cite{sannia2024engineered}.

A complementary line of research modifies the quantum circuit itself. In particular, the mixer determines which transitions the circuit can generate between candidate solutions and can therefore impose important dynamical restrictions. Although the standard Pauli-$X$ mixer remains the most commonly studied architecture, alternative mixers can produce qualitatively different behavior. For instance, a complete DLA characterization has been recently obtained for Grover-mixer QAOA~\cite{TNB}. Constrained XY mixers, which are designed to preserve feasibility conditions during the quantum evolution, have also been analyzed from a Lie-algebraic perspective for connectivity structures such as all-to-all and ring topologies~\cite{kordonowy2025lie,HWORVB}. These results illustrate an important general principle: even when the optimization problem itself is unchanged, modifying the generators of the quantum circuit can fundamentally change the available quantum dynamics. Both parameter transferability and mixer design directions have been combined in the adaptive \cite{liu2022layer,ZTBCEBM} and more recently AI-generative systems \cite{tyagin2025qaoa,ugale2026q3sat}.\\

\noindent {\bf QAOA: The role of DLAs and symmetries~}
Dynamical Lie algebras provide a natural framework for putting these different observations on common mathematical ground. Formally, if a parameterized circuit is generated by Hamiltonians $H_1,\dots,H_m$, its DLA is the real Lie algebra generated by the skew-Hermitian operators $iH_1,\dots,iH_m$ under repeated commutation. For standard QAOA, the corresponding DLA is
\[
\mathfrak{g}_{\mathrm{std}}
=
\langle iH_M,iH_P\rangle_{\mathrm{Lie}}.
\]
The structure and dimension of this algebra characterize the transformations that can be generated by the ansatz and are therefore directly connected to controllability and expressive power~\cite{KLFCCZ,RBSKMLC}. DLAs also provide information about trainability and gradient concentration~\cite{FHCKYHSP,LJGCC,LCSMCC,RBSKMLC}. Their relevance extends beyond QAOA to quantum machine learning, where they provide rigorous criteria for expressivity and trainability and help identify regimes in which barren plateaus are unavoidable~\cite{GLCCS,LTWS,WHSU}. Lie-algebraic structure can additionally reveal classes of variational circuits that admit efficient classical descriptions~\cite{CLG,GLCCS}.

Despite their importance, DLAs associated with QAOA remain difficult to characterize in general. Existing analyses for QAOA with the standard Pauli-$X$ mixer have largely focused on particular graph families, including paths, cycles, and complete graphs~\cite{ASYZ1,KLFCCZ}. More recently, for MaxCut on Erd\H{o}s--R\'enyi random graphs $G(n,p)$ with $p=0.5$, the standard QAOA DLA was shown with high probability to have exponentially large dimension, with a structure consisting of two simple Lie-algebra components of dimension $\Theta(4^n)$~\cite{MYAZ}. The fact that such large algebras arise from only two structured QAOA Hamiltonians illustrates why explicit DLA construction becomes rapidly impractical.

An additional way to change the quantum dynamics arises even before the quantum circuit is constructed: one may change the classical representation of the optimization problem. Many binary optimization problems possess symmetries that identify equivalent solutions. MaxCut provides a canonical example. If a binary string $x$ specifies the two sides of a cut, simultaneously flipping every bit produces its complement $\bar{x}$, which represents exactly the same cut with its two parts exchanged. One may therefore fix the value of one binary variable, equivalently, assign one graph vertex to a prescribed side of the cut, without changing the underlying optimization problem. The use of symmetries to reduce QAOA instances has been considered in several previous works~\cite{TSA1,shaydulin2021classical,zhao2025symmetry}.

What appears trivial classically, however, can be highly nontrivial quantum mechanically. In~\cite{TBFS}, we showed that fixing a single symmetry-related variable modifies the Hamiltonian generators of QAOA and can substantially change both the structure and the dimension of the resulting \emph{reduced dynamical Lie algebra}. Different choices of the fixed vertex therefore give classically equivalent optimization problems but potentially very different parameterized quantum circuits. A particularly striking example is provided by $k$-armed spider graphs. For the unreduced QAOA instance, the dimension of the standard DLA grows exponentially with the number of vertices. Fixing the central vertex decomposes the reduced graph into independent paths, and the corresponding reduced DLA grows only quadratically~\cite{TBFS}. Thus, fixing a single redundant classical variable can change the algebraic complexity of the quantum dynamics from exponential to polynomial.

This observation raises the central algorithmic question considered in our paper: \emph{Given a particular symmetry reduction, can we efficiently determine what quantum dynamics it generates without explicitly recovering the reduced DLA?}\\

\noindent \textbf{Our contribution~} We answer this question for symmetry-reduced QAOA applied to MaxCut on trees. We adapt the structural framework developed in~\cite{TBFS} into a polynomial-time classical algorithm that extracts certified information about the standard reduced DLAs directly from the graph. Given a graph $\Gamma$ and a vertex $v\in V$ whose associated bit is fixed, the algorithm recursively partitions the remaining vertices using two elementary graph quantities: their shortest-path distance from the selected source vertex and the parities of the degrees encountered along these paths.

The Lie-algebraic significance of this partition is the following. For each class $S$ produced by the procedure, the structural results of~\cite{TBFS} certify that a collective Pauli operator of the form
\[
i\sum_{w\in S}X_w
\]
belongs to the standard reduced DLA
$\mathfrak{g}^{\,v}_{\Gamma,\mathrm{std}}$.
Thus, if a class contains a single vertex $w$, the algorithm has therefore certified that the individual generator $iX_w$ belongs to the DLA. Such singleton vertices can then be used as new sources, producing additional partitions that are combined recursively to further separate the graph.

When this refinement process eventually separates every vertex in $V\setminus\{v\}$, all individual $X$ generators are recovered and we obtain
\[
\mathfrak{g}^{\,v}_{\Gamma,\mathrm{std}}
=
\mathfrak{g}^{\,v}_{\Gamma,\mathrm{free}}.
\]
This equality is particularly useful because free DLAs admit structural classifications~\cite{WKKB2,KLFCCZ}. 
%\emph{The reduced QAOA ansatz is then pure-state controllable on the full reduced Hilbert space.} In simple terms, a polynomial-time computation involving only the classical graph certifies that, at sufficiently large finite circuit depth, the QAOA generators are capable of transforming any state of the reduced system into any other state, up to a global phase. \BT{Fix!!!}

\textbf{The novelty of the manuscript is not limited to the cases in which complete characterization of the DLA is achieved.} If the partition remains only partially refined, every singleton vertex still certifies the presence of an individual $iX_w$ generator. Let
\[
\mathcal{S}_{\Gamma,v}
=
\{w\in V\setminus\{v\}\mid
iX_w\in\mathfrak{g}^{\,v}_{\Gamma,\mathrm{std}}\}
\]
be the subset of such vertices. For subgraphs $\Gamma'$ supported on $\mathcal{S}_{\Gamma,v}$, we obtain the inclusion of Lie algebras
\[
\mathfrak{g}_{\Gamma',\mathrm{free}}
\subseteq
\mathfrak{g}^{\,v}_{\Gamma,\mathrm{std}}.
\]
This inclusion induces rigorous lower bounds on the dimension of the standard reduced DLA and, when $\Gamma'$ satisfies the corresponding controllability conditions, exact state-preparation guarantees on the associated subsystem. \emph{Thus, the output of the algorithm should not be viewed as a binary success or failure: partial refinement quantifies how much of the quantum dynamics can be rigorously certified.}

Trees provide a particularly clean and useful setting for developing this framework. Between any two vertices there is a unique path, which means that the technical assumptions required by the structural constructions in~\cite{TBFS} are naturally satisfied without modifying the graph. The same property also makes the algorithm efficient. A breadth-first search on a tree from one source requires
$\mathcal{O}(|V|)$ time, the common refinement of two partitions can be computed in $\mathcal{O}(|V|)$ time, and at most $\mathcal{O}(|V|)$ sources need to be processed. The resulting overall complexity of the proposed algorithm is therefore $\mathcal{O}\bigl(|V|^2\bigr)$.

We further test how informative this classical procedure remains at system sizes for which direct DLA construction is completely impractical. Our numerical experiments use $1,000$ random trees with up to $1,000$ vertices. The average size of the partition classes stabilizes near $1.27$, while the mean number of singleton classes exceeds $64\%$ of the vertices. For $1,000$-vertex graphs, the algorithm identifies on average approximately $643$ singleton vertices. Moreover, relative to the finest splitting permitted by graph automorphisms, the number of singletons found by the algorithm reaches approximately $96\%$ of the theoretical maximum. Thus, although complete vertex separation becomes less common for large instances, the classical procedure continues to resolve a substantial fraction of the underlying quantum dynamics.

More broadly, these results suggest an expanded role for classical preprocessing in hybrid quantum algorithms. Classical computation is usually used to optimize the parameters of a quantum ansatz after the circuit architecture has already been chosen. Our results show that classical structural analysis can instead be used one level earlier: to determine what a candidate quantum representation is dynamically capable of doing. Because different symmetry-equivalent reductions of the same optimization problem may generate substantially different DLAs, an efficient estimate of their structure can provide information about controllability, expressivity, and potentially trainability before the quantum optimization itself is performed. This establishes a step toward symmetry-aware design of variational quantum algorithms in which classical graph structure is used to design a quantum circuit and also select and characterize the quantum dynamics that the circuit will generate. Figure~\ref{fig:placeholder} summarizes the framework and its main consequences. Our code is available at  \url{https://github.com/joseluisfalla/QAOA_Reductions_by_Classical_Symmetries}.

\begin{figure}
    \centering
     \includegraphics[width=\linewidth]{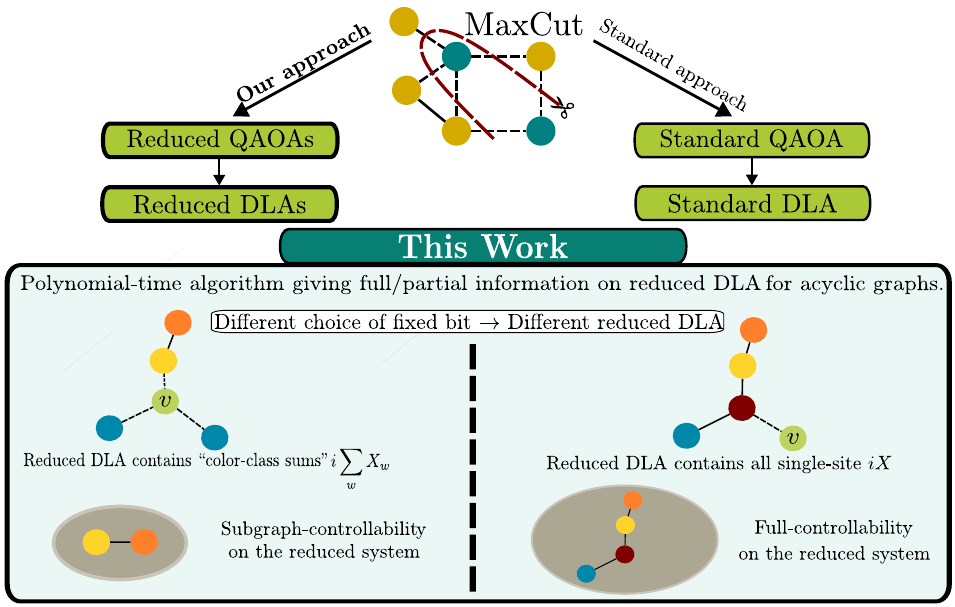}
\caption{\textbf{Overview of results and symmetry-reduction framework.} 
 In the presence of a global bit-flip symmetry, fixing the bit value at any chosen reference vertex $v$ yields a reduced optimization problem equivalent to the original, along with a reduced QAOA ansatz and its corresponding reduced dynamical Lie algebra. A canonical example exhibiting this symmetry is the MaxCut problem, which serves as the primary focus of this work.
\textbf{(Bottom)} We present a polynomial-time classical algorithm with computational complexity $\mathcal{O}(|V|^2)$ that partitions the vertices of connected acyclic graphs based on their shortest-path distances to a chosen reference vertex $v$ and the vertex degree parities along those paths. Selecting different reference vertices $v$ induces distinct partitions, leading to two operational regimes: 
\textbf{(Left) Partial refinement:} certifies subalgebra inclusions, establishing rigorous lower bounds on the reduced DLA dimension and guaranteeing conditional subsystem reachability.
\textbf{(Right) Full refinement:} identifies the reduced
DLA; pure-state controllability follows under the conditions of Section \ref{sec: implications}.}%\BB{Add one line to justify the reduced graph is path graph!}
    \label{fig:placeholder}
\end{figure}
\vspace{0.4in}
\section{Background}

%Variational quantum algorithms (VQAs) constitute a class of hybrid quantum--classical methods in which a parameterized quantum circuit is optimized using classical routines to minimize a cost function, typically expressed as the expectation value of a Hamiltonian~\cite{CABB}. A prominent instance of this paradigm is the Quantum Approximate Optimization Algorithm (QAOA), designed for solving combinatorial optimization problems~\cite{QAOA}. 

\textbf{Notations.} Let $\mathbb{B}^n = \{0,1\}^n$ denote the set of binary strings of length $n$, and let $W = (\mathbb{C}^2)^{\otimes n}$ be the corresponding Hilbert space with computational basis $\{\ket{x}\}_{x \in \mathbb{B}^n}$. Let $\Gamma = (V, E)$ be an undirected graph with vertex set $V = \{1, \dots, n\}$

\textbf{Quantum Approximate Optimization Algorithm.} In QAOA, a classical objective function defined over binary variables is encoded into a quantum operator, referred to as the \emph{problem Hamiltonian}, whose ground state corresponds to an optimal solution. 

The QAOA algorithm alternates unitary evolutions generated by a \emph{problem Hamiltonian} $H_P$ and a \emph{mixer Hamiltonian} $H_M$, which drives transitions between computational basis states to explore the solution space.
Starting from an initial state $\ket{\xi}$, typically chosen as the ground state of $-H_M$, the QAOA ansatz at depth $p$ is given by the unitary operator
\begin{equation}
\label{qaoa-chain}
U(\boldsymbol{\beta},\boldsymbol{\gamma})
  := e^{-i \beta_p H_M} e^{-i \gamma_p H_P} \cdots e^{-i \beta_1 H_M} e^{-i \gamma_1 H_P},
\end{equation}
where $\boldsymbol{\beta} = (\beta_1,\dots,\beta_p)$ and $\boldsymbol{\gamma} = (\gamma_1,\dots,\gamma_p)$ are 
arrays of real variational parameters. The parameterized unitary $U(\boldsymbol{\beta},\boldsymbol{\gamma})$ prepares the variational state
$\ket{\psi(\boldsymbol{\beta},\boldsymbol{\gamma})} = U(\boldsymbol{\beta},\boldsymbol{\gamma}) \ket{\xi}$. Here, the parameters are optimized by evaluating the objective through the expectation value $   \bra{\psi(\boldsymbol{\beta},\boldsymbol{\gamma})} H_P \ket{\psi(\boldsymbol{\beta},\boldsymbol{\gamma})}$
which is used as the loss function in the classical optimization loop. A schematic representation of this procedure is shown in Fig.~\ref{fig:qaoa_scheme}. 
Further details can be found in~\cite{BBCC, QAOA, HWORVB}.

\begin{figure}[htbp]
\includegraphics[width=0.8\textwidth]{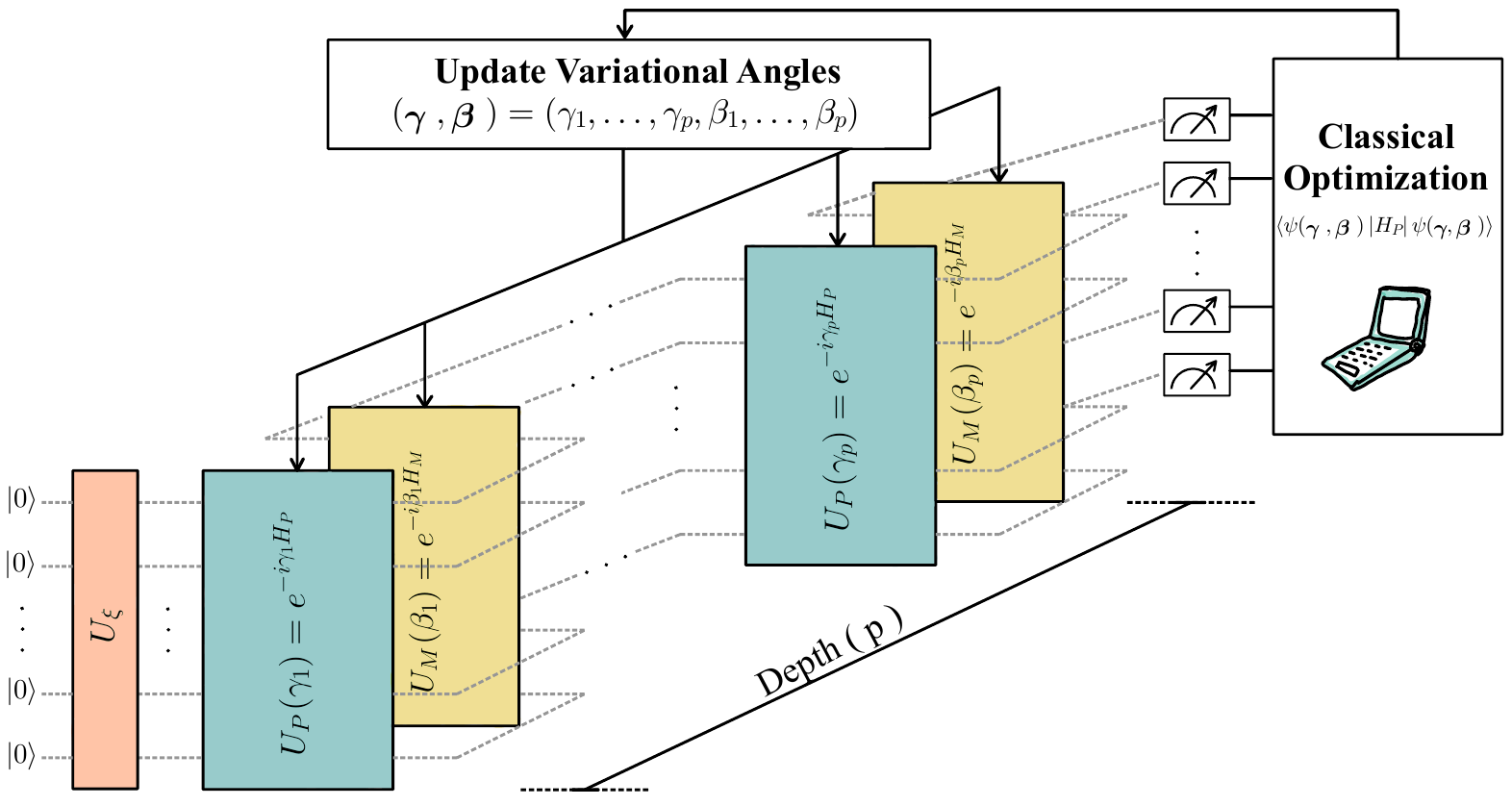}
\caption{\label{fig:qaoa_scheme} \textbf{Schematic illustration of the QAOA circuit.} The initial state preparation unitary $U_\xi$ with $U_{\xi}(\ket{0}^{\otimes n}) = \ket{\xi}$) is followed by $p$ alternating unitaries $U_P(\gamma_j):=e^{-i \gamma_j H_P}$ and $U_M(\beta_j):=e^{-i \beta_j H_M}$. At the end of the circuit, we measure the state in the computational basis. Each measurement outcome $x \in \mathbb{B}^n$ is assigned the value $F(x)$ of the objective function, 
and the empirical mean of these values provides an estimate of 
$\bra{\psi(\boldsymbol{\beta},\boldsymbol{\gamma})} H_P \ket{\psi(\boldsymbol{\beta},\boldsymbol{\gamma})}$. 
This estimate is then used in a classical optimization loop to update the parameters $(\boldsymbol{\beta}, \boldsymbol{\gamma})$ with the goal of minimizing the empirical mean.}
\end{figure}

\par{}
\textbf{Dynamical Lie Algebras.} Dynamical Lie algebras (DLAs) provide a powerful framework for analyzing the capabilities of variational quantum algorithms.

\begin{defn}
\label{defn:DLA}
Let $H_1, \dots, H_m$ be the Hermitian operators (Hamiltonians) that generate the unitary gates comprising a parameterized quantum circuit. The associated \textit{dynamical Lie algebra} is defined as the real Lie algebra generated by these operators under the standard Lie bracket:
\begin{equation}
\mathfrak{g} := \langle iH_1, \dots, iH_m \rangle_{\mathrm{Lie}}.
\end{equation}

In the specific case of the Quantum Approximate Optimization Algorithm, the \emph{standard dynamical Lie algebra} is the real Lie algebra generated by the mixer and problem Hamiltonians, $H_M$ and $H_P$:
\begin{equation}\label{eq:stdDLA}
\mathfrak{g}_{\mathrm{std}} := \langle iH_M, iH_P \rangle_{\mathrm{Lie}}.
\end{equation}
\end{defn}

More generally, if the Hamiltonians admit decompositions into local terms,
\[
H_M = \sum_j M_j, \qquad H_P = \sum_k P_k,
\]
one also defines the \emph{free dynamical Lie algebra} as the Lie algebra generated by the individual local terms:
\begin{equation}
\mathfrak{g}_{\mathrm{free}} := \left\langle \{iM_j\}, \{iP_k\} \right\rangle_{\mathrm{Lie}}.
\end{equation}

The generated Lie algebra and its representation determine the associated connected dynamical group and therefore provide the standard algebraic description of arbitrary-depth reachability of the ansatz. Therefore, the analysis of DLAs is closely tied to the ansatz expressibility and trainability~\cite{KLFCCZ, RBSKMLC}. 
In particular, DLAs provide insight into the effective Hilbert space explored by QAOA and the geometry of the associated optimization landscape.

However, a fundamental challenge in the Lie-algebraic analysis of QAOA is the rapid growth of the associated DLAs with system size. For a considerable period, obtaining structural information about standard DLAs proved to be challenging, with rigorous results largely limited to specific graph families. A notable exception is the recent work of~\cite{MYAZ}, which provides new insight and suggests that, in generic settings, the standard DLA may be expected to coincide with the free one. Since free DLAs admit a complete classification (see \cite{WKKB2} for general result and see Theorem 1~\cite{KLFCCZ} for MaxCut),
%\BT{for MaxCut!!! The general result appears in \cite{WKKB2}}
in cases where
\begin{equation}
\label{eq:EqualityOfDLAs}
\mathfrak{g}_{\mathrm{std}}=\mathfrak{g}_{\mathrm{free}},
\end{equation}
one can immediately determine the precise structure of the standard DLA. This equality has previously been established only for the MaxCut problem on Erd\H{o}s--R\'enyi random graphs with edge creation probability $p=1/2$, where it holds with high probability $1-\exp(-\Omega(n))$ (see Theorem~3 in~\cite{MYAZ}).

\textbf{MaxCut Problem.} The MaxCut problem seeks a bipartition of vertices that maximizes the number of edges crossing between the two subsets. A partition can be encoded by a bit string $x \in \mathbb{B}^n$, where each $x_i \in \{0,1\}$ denotes the subset containing vertex $i$. An edge $(i, j) \in E$ is cut if and only if $x_i \neq x_j$. Accordingly, the objective function can be written as
\begin{equation}
    \sum_{(i,j)\in E} \left[(1 - x_i)x_j + x_i(1 - x_j)\right],
\end{equation}
which counts the total number of edges crossing the partition. The corresponding problem and standard mixer Hamiltonians for the MaxCut problem are
\begin{equation}
    H_M = \sum\limits_{w \in V} X_w, 
    \qquad
    H_P = \sum\limits_{(w,w') \in E} Z_w Z_{w'},
\end{equation}
which in turn give rise to the \emph{free} dynamical Lie algebra
\begin{equation}
    \mathfrak{g}_{\Gamma,\mathrm{free}}
= \Big\langle 
    \{ i X_w \mid w \in  V\}, \;
    \{ i Z_w Z_{w'} \mid (w w') \in E \} 
\Big\rangle_{\mathrm{Lie}},
\end{equation}
and the \emph{standard} dynamical Lie algebra
\begin{equation}
    \mathfrak{g}_{\Gamma,\mathrm{std}}
= \Big\langle  \mathcal{X}, \mathcal{Z}\Big\rangle_{\mathrm{Lie}}.
\end{equation}
with $\mathcal{X}:=iH_M$ and $\mathcal{Z}:=iH_P$.

In this work, the MaxCut problem will serve as the primary target problem for our Lie-algebraic analysis.
%\BT{It is important to set the scope!} \BB{We already said the scope is clear in the Introduction. The Background is optional for readers, so I don't think we should have an important statement here.}

\textbf{Symmetries and Reduced QAOAs.} Many combinatorial optimization problems, including the MaxCut problem, possess symmetries that can reduce the problem size.  In particular, the global bit-flip symmetry, under which the transformation $x \mapsto 1-x$ leaves the binary objective invariant, allows one variable to be fixed without loss of generality and reduces the problem to $n-1$ variables.

In the quantum setting, this classical reduction induces modified Hamiltonians acting on a reduced Hilbert space. This space is obtained by contracting the $2$-dimensional vector space, $\mathbb{C}^2$, associated with the fixed bit $v$, to the one-dimensional subspace $\mathbb{C}|c\rangle$ corresponding to the fixed classical value $c \in \{0, 1\}$. Thus the reduced Hilbert space is defined as
\begin{equation}
    \label{eq:reduced_hilbert}
    W_j = \mathbb{C}^2 \otimes \cdots \otimes \underbrace{\mathbb{C}|c\rangle}_{v\text{-th position}} \otimes \cdots \otimes \mathbb{C}^2 \simeq (\mathbb{C}^2)^{\otimes (n-1)}.
\end{equation}
Within this reduced space, operators acting on the fixed qubit are replaced by their corresponding scalar eigenvalues.

For MaxCut, fixing the bit value of one vertex converts interactions involving that vertex to single-qubit terms, yielding a reduced problem Hamiltonian. The resulting QAOA instance then generates a \emph{reduced dynamical Lie algebra}, defined analogously to the standard case but on the reduced system. By fixing the value of a single vertex, say $x_v = 0$, we thereby select a unique representative from each equivalence class. The resulting reduced problem is then defined on $n-1$ variables and has an objective function
\begin{equation}
 \sum_{\substack{(i,j)\in E \\ i,j \neq v}} \left[(1 - x_i)x_j + x_i(1 - x_j)\right]
+ \sum_{(v,j)\in E} x_j,
\end{equation}
where the second term accounts for edges incident to the fixed vertex. By fixing the bit value of vertex $v \in V$, the corresponding \emph{reduced} mixer and problem Hamiltonians become
\begin{equation}
\label{eq:reduced_hams}
\begin{aligned}
      H_M^v &= \sum\limits_{w \in V \setminus \{v\}} X_w, 
    \\
    H_P^v &= \sum\limits_{\substack{(w,w') \in E \\ w,w' \neq v}} Z_w Z_{w'} 
    + \sum\limits_{(v,w) \in E} Z_w,
\end{aligned}
\end{equation}
and the corresponding \emph{free} reduced  dynamical Lie algebra is defined as
\begin{equation}
\label{eq:reduced_free}
\mathfrak{g}^{\,v}_{\Gamma,\mathrm{free}}
= \Big\langle \{ i X_w \mid w \in V \setminus \{v\} \}, \{ i Z_w Z_{w'} \mid (w w') \in E, \, w,w' \neq v \}, \{ i Z_w \mid (v w) \in E \}
\Big\rangle_{\mathrm{Lie}},
\end{equation}
while the \emph{standard}  reduced Lie algebra is defined as
\begin{equation}
\label{eq:reduced_standard}
\mathfrak{g}^{\,v}_{\Gamma,\mathrm{std}}
= \Big\langle 
    \mathcal{X}_{\widehat{v}}, \; \mathcal{Z}_{\widehat{v}}
\Big\rangle_{\mathrm{Lie}},
\end{equation}
with generators
\begin{equation}
\mathcal{X}_{\widehat{v}} := iH_M^v 
\text{ and }
\mathcal{Z}_{\widehat{v}} := iH_P^v.    
\end{equation}

These reduced formulations highlight the interplay between problem structure and the associated Lie-algebraic dynamics. In particular, the dimension and structure of the reduced DLA may differ significantly from those of the original system, providing a useful setting for analyzing how graph structure and symmetry influence the behavior of QAOA (see~\cite{TBFS}).

    \begin{rmk}
        Following the definitions above, the inclusions below capture the relationship between the standard and free dynamical Lie algebras:
        \begin{equation}
            \begin{aligned}
                \mathfrak{g}_{\Gamma, \text{std}} \subseteq \mathfrak{g}_{\Gamma, \text{free}},\\
                \mathfrak{g}^{v}_{\Gamma, \text{std}} \subseteq \mathfrak{g}^{v}_{\Gamma, \text{free}}
        \end{aligned}
        \end{equation}
    \end{rmk}

\section{Methodology}
\label{sec:Method}

In this section, we leverage the properties of standard reduced DLAs established in~\cite{TBFS} to construct an algorithm that determines the containment of certain designated elements within $\mathfrak{g}^{v}_{\Gamma, \text{std}}$ for a pair $(\Gamma, v)$ consisting of an acyclic graph $\Gamma$ and a distinguished node $v$. We focus on acyclic graphs because this class naturally satisfies the structural conditions formulated in~\cite{TBFS}, which are essential to guarantee that elements reconstructed from the algorithm's output reside within the algebra. For complete technical details, we refer the reader to Appendix~\ref{sec:groundwork} and the references therein.

Given an acyclic graph $\Gamma=(V,E)$ and a vertex $v\in V$, our goal is to construct the finest partition of the remaining vertices, $V\setminus\{v\}$, induced by the membership of the corresponding linear combinations of Pauli $X$ operators in the standard reduced DLA $\mathfrak{g}^{\,v}_{\Gamma,\mathrm{std}}$ (see Appendix~\ref{sec:groundwork}). We denote this partition by
\begin{equation}
    \label{eq:Partition}
    \text{\ScissorHollowRight}_{v} := \bigsqcup_{i=1}^{k_v} V_i,
\end{equation}
which is obtained by implementing the procedure based on the foundation in Section~\ref{sec:groundwork}. In general, the initial partition, ~$\underset{v, \text{init}}{\text{\ScissorHollowRight}}$, is first induced by the vertex subsets $C^{v}_{j,\mathbf{a}}$ (see Definition~\ref{defn:vertexSubsets}). We then recursively refine this partition using analogous subsets $C^{w}_{j,\mathbf{a}}$ for any vertex $w$ that emerges as a singleton during an intermediate step. The ultimate objective is to maximize the number of singleton subsets, $\# \{V_i \mid |V_i|=1\}$, in the final partition.

We search for this optimal configuration using the framework formalized in Algorithm~\ref{alg:recursive-partition}, which consists of two main subroutines: $\texttt{Partition}$ and $\texttt{Refinement}$. 

\subsection{Partition Subroutine}
Given the graph~$\Gamma=(V,E)$ and the source vertex~$v \in V$, we execute a Breadth-First Search originating from~$v$. This $\texttt{BFS}(v)$ traversal computes the shortest-path distance 
\begin{equation}
    \label{eq:Distance}
    D(u) := \operatorname{dist}_{\Gamma}(v,u)
\end{equation}
for every vertex~$u \in V$. Additionally, for each target vertex~$u \neq v$, we record the degree parity sequence~$\mathbf{a}$ associated with the  path connecting~$v$ to~$u$. 
% These structural patterns are then organized and stored in a nested list~$A$, indexed either by the corresponding vertex itself or, equivalently, by the coordinate pair~$(D(u), \mathbf{a})$. \BT{I think this should be revised: we have exactly one sequence for every vertex (as we work with trees).}
Thus, this BFS procedure induces an initial partition

\begin{equation}
    \label{eq:InitialSplitting}
\underset{v, \text{init}}{\text{\ScissorHollowRight}}:=\bigsqcup\limits_{i} V_i,
\end{equation}

of the remaining vertices, where each equivalence class~$V_{i} = C^{v}_{j, \mathbf{a}}$ comprises vertices sharing both the same shortest-path distance from~$v$ and the same degree-parity pattern along this path (see Definition~\ref{defn:vertexSubsets}). We denote this entire classification subroutine as $\texttt{Partition}(v)$. It is well known that breadth-first search can be implemented in time $\mathcal{O}(|V|+|E|)$ (see, e.g., Chapter~20 of~\cite{cormen2022introduction}). For a connected acyclic graph, $|E| = |V|-1$, simplifying this time complexity to $\mathcal{O}(|V|)$.

\subsection{Refinement Subroutine}
Based on the initial partition~$\underset{v, \text{init}}{\text{\ScissorHollowRight}}$, we select a singleton vertex~$w \in \{h \in V_i \mid |V_i|=1\}$, provided one exists, and execute the subroutine $\texttt{Partition}(w)$ to obtain the resulting partition~$\text{\ScissorHollowRight}_{w}$. Next, we apply the subroutine $\texttt{Refinement}\left(\underset{v, \text{init}}{\text{\ScissorHollowRight}}, \text{\ScissorHollowRight}_{w}\right)$, which refines the initial partition based on this latest singleton grouping.  %In details, for each vertex $u \in V \setminus\{v\}$, find $\{V^{w}_{i}\} \in \text{\ScissorHollowRight}_{w}$ and $\{V^{v}_{i}\} \in \underset{v, \text{init}}{\text{\ScissorHollowRight}}$ such that $u \in \{V^{w}_{i}\}, \{V^{v}_{i}\}$. If $\{V^{w}_{i}\} \not\equiv \{V^{v}_{i}\}$, we perform $\{V^{v}_{i}\} \setminus \{V^{w}_{i}\}$ and $\{V^{v}_{i}\} \cap \{V^{w}_{i}\}$ to obtain new possible groupings. 
This refined partition is then designated as the current partition; that is,

\begin{equation}
    \label{eq:RefinedSplitting}
\underset{v, \text{current}}{\text{\ScissorHollowRight}}:=\texttt{Refinement}\left(\underset{v, \text{init}}{\text{\ScissorHollowRight}},\text{\ScissorHollowRight}_{w}\right)
\end{equation}

In more details, given two partitions $\text{\ScissorHollowRight}_1$ and $\text{\ScissorHollowRight}_2$ of the vertex set $V \setminus \{v\}$, the \textit{coarsest common refinement} formed by the pairwise intersections of the subsets from $\text{\ScissorHollowRight}_1$ and $\text{\ScissorHollowRight}_2$ can be computed efficiently by reducing it to the lexicographic sorting problem detailed in Section~2 of~\cite{PT}.

To formally establish this equivalence, let the blocks within each partition be indexed by
$$\text{\ScissorHollowRight}_1 = \{A_1, A_2, \dots, A_{|\text{\ScissorHollowRight}_1|}\}$$ and $$\text{\ScissorHollowRight}_2 = \{B_1, B_2, \dots, B_{|\text{\ScissorHollowRight}_2|}\}.$$ We can construct a map $\varphi: V\setminus\{v\} \to \Sigma^2$ that assigns to each vertex $u \in V\setminus\{v\}$ a string of length $2$ over the alphabet $\Sigma = \{1, 2, \dots, n\}$. The string representation is defined as:
\begin{equation}
\label{eq:CommonPartitionAndLexicogrSortingEquiv}
\varphi(u) = (\sigma_1(u), \sigma_2(u))
\end{equation}
where $\sigma_1(u) = i$ if $u \in A_i$, and $\sigma_2(u) = j$ if $u \in B_j$. Because every block in a partition is non-empty, the total number of blocks satisfies $\abs{\text{\ScissorHollowRight}_1} \le n-1$ and $\left|\text{\ScissorHollowRight}_2\right| \le n-1$. Thus, the maximum alphabet size $k = \max\left(\left|\text{\ScissorHollowRight}_1\right|, \left|\text{\ScissorHollowRight}_2\right|\right)$ is bounded by $n$.

Under this mapping, two vertices $u, w \in V\setminus\{v\}$ belong to the same block in the coarsest common refinement if and only if they share identical block memberships across both partitions, which means $u \in A_i \cap B_j$ and $w \in A_i \cap B_j$. This condition holds if and only if $\sigma_1(u) = \sigma_1(w)$ and $\sigma_2(u) = \sigma_2(w)$, meaning their assigned strings are identical ($\varphi(u) = \varphi(w)$). 

Once the assigned strings are determined, the grouping of vertices can be implemented either by inserting the pairs $(\sigma_1(u), \sigma_2(u))$ into a dictionary, or deterministically via radix sorting, since the maximum label value is bounded by $\abs{V}$. In either approach, the refinement step takes  $\mathcal{O}(\abs{V})$ time. Consequently, each call to the \texttt{Refinement} subroutine operates in linear time and is asymptotically equivalent to the preceding breadth-first search computation of \texttt{Partition}($w$), which requires $\mathcal{O}(\abs{V})$ time.

% In Section~2 of~\cite{PT} it was established that a multiset of $n$ strings can be lexicographically sorted in $\mathcal{O}(n + p)$ time, where $p$ is the total length of the distinguishing prefixes of the strings, defined as the smallest prefixes sufficient to distinguish the strings from each other. In our case, every string has length $2$. Consequently, the length of the distinguishing prefix for any string is at most $2$. Summing over all vertices, the total length of these distinguishing prefixes satisfies $p \le 2(|V|-1)$.% = \mathcal{O}(|V|)$.

% Once sorted, the elements of identical intersection blocks \BT{???} are arranged adjacently in the output array.\BT{???} A final, single linear pass over the sorted array compares adjacent pairs $\varphi(u_i)$ and $\varphi(u_{i+1})$ in $\mathcal{O}(1)$ time per step. A new partition block is generated whenever a component of the string tuple changes. This read-off phase preserves the optimal time bound, computing the explicit coarsest common refinement in $\mathcal{O}(|V|)$ total time. Therefore, this subroutine can be cast as a lexicographic sorting problem, allowing it to be solved in $\mathcal{O}(\abs{V} + \abs{E})$ time. \BT{I am not sure what is going on in this paragraph. Please revise it carefully.}

The \texttt{Partition} and \texttt{Refinement} subroutines are then repeated until either there are no remaining singleton vertices to process, or none of the refinement steps result in a strictly finer partition. Consequently, the algorithm executes at most $\mathcal{O}(\abs{V})$ iterations. Since each iteration requires $\mathcal{O}(\abs{V})$ time, as established previously, the overall running time is the product of the maximum number of iterations and the cost per iteration. The following theorem summarizes the structural guarantees and computational complexity of Algorithm~\ref{alg:recursive-partition}. The proof of the Theorem \ref{thm:algorithm_certificate}  can be found in Appendix \ref{sec:groundwork}

\begin{thm}
    \label{thm:algorithm_certificate}
Let $\Gamma = (V, E)$ be a connected acyclic graph and let $v \in V$ be a distinguished vertex. Algorithm~\ref{alg:recursive-partition} returns a partition $\text{\ScissorHollowRight}_{v}$ of $V \setminus \{v\}$ satisfying the following properties:
\begin{enumerate}
    \item For each block $V_i \in \text{\ScissorHollowRight}_{v}$,
    \begin{equation*}
        i \sum_{w \in V_i} X_{w} \in \mathfrak{g}^{v}_{\Gamma, \mathrm{std}}.
    \end{equation*}

    \item Every singleton $\{w\} \in \text{\ScissorHollowRight}_{v}$ guarantees $iX_{w} \in \mathfrak{g}^{v}_{\Gamma, \mathrm{std}}$.

    \item If every block $V_i \in \text{\ScissorHollowRight}_{v}$ is a singleton, then $\mathfrak{g}^{v}_{\Gamma, \mathrm{std}} = \mathfrak{g}^{v}_{\Gamma, \mathrm{free}}$.

    \item Algorithm~\ref{alg:recursive-partition} has a worst-case time complexity of $\mathcal{O}\bigl(|V|^2\bigr)$.
\end{enumerate}
\end{thm}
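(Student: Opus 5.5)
The plan is to prove the four items in order. Items 1 and 2 come by induction on the iterations of Algorithm~\ref{alg:recursive-partition}, item 3 by a generator-containment argument, and item 4 by counting iterations.

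\textbf{Item 1.} The base case is the initial partition $\underset{v,\text{init}}{\text{\ScissorHollowRight}}$. Its blocks are exactly the sets $C^{v}_{j,\mathbf{a}}$ of Definition~\ref{defn:vertexSubsets}. On a tree the path from $v$ to each $u$ is unique, so the degree-parity sequence $\mathbf{a}$ is well defined, and the structural hypotheses of the results imported from~\cite{TBFS} in Appendix~\ref{sec:groundwork} hold automatically. Those results then give $i\sum_{w\in C^{v}_{j,\mathbf{a}}}X_w\in\mathfrak{g}^{v}_{\Gamma,\mathrm{std}}$.

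The inductive step has two parts.

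First, suppose a singleton $\{w\}$ appears in the current partition, so $iX_w\in\mathfrak{g}^{v}_{\Gamma,\mathrm{std}}$. We need $i\sum_{u\in C^{w}_{j,\mathbf{a}}}X_u\in\mathfrak{g}^{v}_{\Gamma,\mathrm{std}}$. The natural route is to rerun the~\cite{TBFS} construction with $w$ as the source, i.e.\ use $iX_w$ together with $\mathcal{X}_{\widehat v}$ and $\mathcal{Z}_{\widehat v}$. I would proceed in three steps:
\begin{enumerate}
\item Bracket $iX_w$ with $\mathcal{Z}_{\widehat v}$ to isolate $Y_w$ times the $Z$'s of its neighbours.
\item Bracket again with $\mathcal{X}_{\widehat v}$ and $\mathcal{Z}_{\widehat v}$, propagating outward layer by layer from $w$. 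The degree parities along the unique paths decide which terms survive, exactly as in the source-$v$ case.
\item Handle the fixed vertex $v$, which appears only through the linear terms $Z_u$ for $u\sim v$.
\end{enumerate}
I expect this to be the main obstacle. One must check that the appendix lemma is stated for an arbitrary vertex whose $iX$ already lies in the algebra, not only for the reduction vertex. Alternatively, one must show that the $v$-boundary terms do not spoil the parity bookkeeping. If the appendix only covers the source $v$, I would adapt its proof by replacing $\mathcal{Z}_{\widehat v}$ with the commutator $[iX_w,[iX_w,\mathcal{Z}_{\widehat v}]]$, localized at $w$.

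Second, the refinement step. If $i\sum_{A}X$ and $i\sum_{B}X$ both lie in the algebra, I need $i\sum_{A\cap B}X$ to lie there too. Pure linearity only gives sums and differences, so this must also be derived. The plan is to take a commutator chain with $\mathcal{Z}_{\widehat v}$, forming $[\,i\sum_A X,[\,i\sum_B X,\cdot\,]]$, and then map back to the $X$-sector via the same degree and parity mechanism. The projection onto the intersection comes from the fact that $X_u$ and $X_{u'}$ commute while $[X_u,[X_u,Z_u]]\propto Z_u$. Concretely, I would write
\[
\Bigl[i\textstyle\sum_A X,\bigl[i\sum_B X,\,iZ_u\bigr]\Bigr]\neq 0
\quad\text{only if } u\in A\cap B .
\]
Applying this to $\mathcal{Z}_{\widehat v}$ and then bracketing with $\mathcal{X}_{\widehat v}$ should recover $i\sum_{A\cap B}X$.

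\textbf{Item 2.} This is the special case $|V_i|=1$ of item 1.

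\textbf{Item 3.} If every block is a singleton, then $iX_w\in\mathfrak{g}^{v}_{\Gamma,\mathrm{std}}$ for all $w\neq v$. The commutators $[iX_w,[iX_w,\mathcal{Z}_{\widehat v}]]$ isolate the terms of $\mathcal{Z}_{\widehat v}$ containing $Z_w$. Differences and nested brackets over two adjacent $w,w'$ then give each $iZ_wZ_{w'}$ and each $iZ_w$ with $w\sim v$ up to scalar. So every generator of $\mathfrak{g}^{v}_{\Gamma,\mathrm{free}}$ lies in the standard algebra, and combined with the Remark above this gives equality.

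\textbf{Item 4.} This follows from the counts already given in Section~\ref{sec:Method}. Each \texttt{Partition} call is an $\mathcal{O}(|V|)$ BFS on a tree, and each \texttt{Refinement} call is an $\mathcal{O}(|V|)$ radix or dictionary grouping. Each vertex is used as a source at most once, so there are at most $|V|$ iterations. The total is therefore $\mathcal{O}(|V|^2)$.
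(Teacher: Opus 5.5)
Your items 2--4 are fine and follow the paper; item 3 is exactly its recovery of $iZ_wZ_{w'}$ and $iZ_{w'}$ by nested brackets with single-site $iX$'s. The re-rooting step you single out as the main obstacle is not where the difficulty lies: it is already supplied by the imported result in Eq.~\eqref{eq:FixedDistX2element} (Remark~IV.12 of~\cite{TBFS}). The genuine gap is the refinement step of item 1, namely showing that $i\sum_{u\in A\cap B}X_u\in\mathfrak{g}^{v}_{\Gamma,\mathrm{std}}$, and the mechanism you sketch does not work. Your projection identity is true for a single-site $iZ_u$, but $\mathcal{Z}_{\widehat v}$ consists mostly of two-body terms. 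Write $\mathcal{X}_A:=i\sum_{u\in A}X_u$ and let $\chi$ denote characteristic functions. Then for an edge $(a,b)$,
\[
\bigl[\mathcal{X}_A,[\mathcal{X}_B,iZ_aZ_b]\bigr]
=-4i\bigl(\chi_{A\cap B}(a)+\chi_{A\cap B}(b)\bigr)Z_aZ_b
+4i\bigl(\chi_A(a)\chi_B(b)+\chi_B(a)\chi_A(b)\bigr)Y_aY_b .
\]
So the output is not localized on $A\cap B$. It carries $Y_aY_b$ terms on every edge joining $A$ to $B$, and $Z_aZ_b$ terms on edges leaving $A\cap B$ toward arbitrary vertices. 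Bracketing afterwards with $\mathcal{X}_{\widehat v}$ can never produce a pure $X$-sum: $\operatorname{ad}_{X_u}$ only exchanges $Y_u$ and $Z_u$, so a Pauli string without $X$ factors stays without them. The only route back to the $X$ sector is $\operatorname{ad}^2_{\mathcal{A}}$ applied to an $X$-sum, for some $ZZ$-type $\mathcal{A}$. This returns a uniform multiple of $X_u$, with no $X_uZ_pZ_q$ cross terms, only if $u$ lies on exactly one edge of $\mathcal{A}$ whose other endpoint is outside the $X$-sum. Nothing in your sketch arranges this, and with no structural information on $A$ and $B$ there is no way to arrange it.

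The missing idea is to carry the layer structure through the induction. Every current block lies in a single layer $\mathcal{N}_{v,j}$. Every $C^{w}_{j,\mathbf a}$ lies in $\mathcal{N}_{w,j}$, with $iX_w$ already certified. In a tree, each layer is an independent set and each of its vertices has a unique parent in the previous layer. Lemma~\ref{lem:refinement_closure} exploits exactly this, using the layer sums $\mathcal{X}_{w,t}$ and $\mathcal{Z}_{\widehat v,2}\in\mathfrak{g}^{v}_{\Gamma,\mathrm{std}}$ (Lemma~A.5 of~\cite{TBFS}). The elements $\operatorname{ad}^2_{\mathcal{X}_{w,t-1}}\operatorname{ad}^2_{\mathcal{X}_{w,t}}(\mathcal{Z}_{\widehat v,2})$ are proportional to the sums of edge terms between consecutive layers, and they are used to slice $\mathcal{X}_{\mathcal{S}_1}$ by distance from $w$. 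Then $\mathcal{A}=\operatorname{ad}^2_{\mathcal{X}_{w,j-1}}\operatorname{ad}^2_{\mathcal{X}_{\mathcal{S}_1\cap\mathcal{N}_{w,j}}}(\mathcal{Z}_{\widehat v,2})$ is supported exactly on the parent edges $Z_{p(u)}Z_u$ with $u\in\mathcal{S}_1\cap\mathcal{N}_{w,j}$. Since $\mathcal{S}_2\subseteq\mathcal{N}_{w,j}$ contains no parents, $\operatorname{ad}^2_{\mathcal{A}}(\mathcal{X}_{\mathcal{S}_2})\propto\mathcal{X}_{\mathcal{S}_1\cap\mathcal{S}_2}$. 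Without this invariant and construction, item 1, and hence item 2 which you derive from it, is established only for the initial partition and not for any refined one.
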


\subsection{Concrete Example}
We now demonstrate the execution of our program (Algorithm~\ref{alg:recursive-partition}) on a concrete example.

\begin{ex}
    Let $\Gamma$ be a tree graph with the set of vertices $V = \{1, 2, 3, 4, 5, 6, 7, 8\}$ and edges $E = \{(1, 2), (1, 3), (1, 4), (2, 5), (2, 6), (3, 7), (5, 8)\}$, as shown in the left panel of Figure~\ref{fig:step1_partition}. Let the fixed node be $v = 1$; consequently, our objective is to partition the remaining vertices of the tree, $V \setminus \{v\} = \{2, 3, 4, 5, 6, 7, 8\}$.
\end{ex}
% \textbf{Step 1.} Given a graph $\Gamma = (V, E)$ and a node $s_{0}$ as our input, we first check if any nodes satisfy Eq. (\ref{eq:ParityAssumption}), and thus violate the parity assumption as stated in Section IV.A in \cite{TBFS}. For every node that violates the assumption, we extend the graph $\Gamma$ to $\widehat{\Gamma}$, following the construction in Section IV.A in \cite{TBFS}, which attaches a triangular segment to the graph.
% \begin{figure}[H]
%     \centering
%     \includegraphics[width=\linewidth]{Algorithm, RDLA for trees/figures/step1.pdf}
%     \caption{Node $4$ in the initial graph with fixed vertex $0$ is the only node violating Eq.~\eqref{eq:ParityAssumption}. The resulting extended graph is shown on the right-hand side of the figure. All nodes in the extended graph satisfy the parity assumption.}
%     \label{fig:step1}
% \end{figure} \BT{Remove! We don't need this for trees. Renumber the remaining steps accordingly.}

\textbf{Step 1.} First, we execute $\texttt{Partition}(v)$ to obtain the initial partition~$\underset{v, \text{init}}{\text{\ScissorHollowRight}}$ by assigning each vertex~$u \in V \setminus \{v\}$ to its respective group~$C^{v}_{j, a}$. From this initial partition, we then identify the set of singleton vertices, $R = \{h \in V_{i} \mid \abs{V_{i}} = 1\}$ (see Figure \ref{fig:step1_partition}).
\begin{figure}[htbp]
    \centering
    \includegraphics[width=0.6\linewidth]{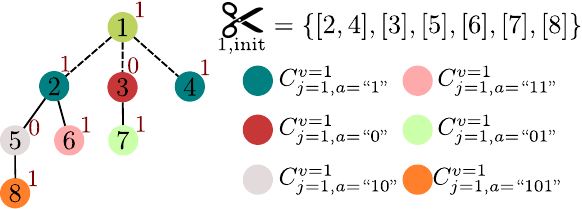}
    \caption{The initial grouping~${\underset{v=1, \text{init}}{\text{\ScissorHollowRight}}}$ obtained from \texttt{Partition}($1$), illustrating the assignment of nodes $\{2, 3, 4, 5, 6, 7, 8\}$ to their respective groups $C^{v=1}_{j, a}$. From the initial grouping, it is straightforward to obtain the singleton set $R = \{3, 5, 6, 7, 8\}$}
    \label{fig:step1_partition}
\end{figure}

\textbf{Step 2.} As the initial execution of $\texttt{Partition}(v)$ does not fully partition the vertices into singletons, we enter the refinement loop. We select a singleton vertex~$k \in R$ (i.e., a vertex whose respective cell~$C^{v}_{j, a}$ contains only $k$ itself) to serve as our next root. We then execute the subroutine $\texttt{Partition}(k)$, which yields the new partition~$\text{\ScissorHollowRight}_k$ and an updated singleton set~$R$ (see Figure \ref{fig:step2}).
\begin{figure}[htbp]
    %\vspace{-10mm}
    \centering
    \includegraphics[width=0.6\linewidth]{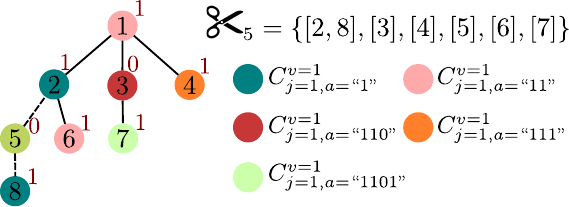}
    \caption{The grouping~$\text{\ScissorHollowRight}_5$ obtained from \texttt{Partition}($5$), illustrating the assignment of nodes $\{2, 3, 4, 5, 6, 7, 8\}$ to their respective groups $C^{v=5}_{j, a}$. From this grouping, we can update the singleton set to $R = \{3, 4, 5^*, 6, 7, 8\}$. Here, the asterisk for node $5$ denotes that it has been exploited as a root node.}
    \label{fig:step2}
\end{figure}

\textbf{Step 3.} Inside the refinement loop, after obtaining the new grouping $\text{\ScissorHollowRight}_k$ from \texttt{Partition}($k$), the subroutine $\texttt{Refinement}\left(\underset{v, \text{current}}{\text{\ScissorHollowRight}},\text{\ScissorHollowRight}_{k}\right)$ is performed to refine the initial partition. Given the updated partition~$\text{\ScissorHollowRight}_{k}$, the loop either (1) continues by rerooting unused singletons or (2) terminates when the partition forms a disjoint union of singleton sets or when all singletons have been used. Consequently, after exiting the loop, our program  returns the final partition~$\underset{v, \text{final}}{\text{\ScissorHollowRight}}$ (see Figure \ref{fig:step3}).
\begin{figure}[htbp]
    %\vspace{-10mm}
    \centering
    \includegraphics[width=0.6\linewidth]{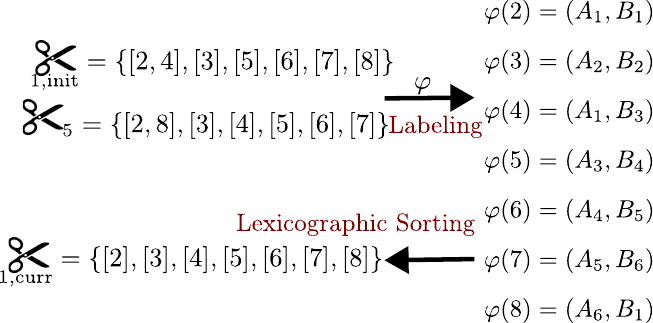}
    \caption{Given the obtained partition~$\text{\ScissorHollowRight}_5$, we update the initial partition~${\underset{1, \text{init}}{\text{\ScissorHollowRight}}}$ by calling subroutine $\texttt{Refinement}\left(\underset{1, \text{init}}{\text{\ScissorHollowRight}}, \text{\ScissorHollowRight}_5\right)$. Inside the subroutine, both partitions are first labeled using the string representation as defined in Eq.~\eqref{eq:CommonPartitionAndLexicogrSortingEquiv}. Then, lexicographic sorting is performed, which groups the nodes with the same labels. Since each vertex receives a unique labeling string, the resulting refinement is discrete; that is, every cell of the partition is a singleton.}
    \label{fig:step3}
\end{figure}

\begin{algorithm}
    \DontPrintSemicolon
    \caption{Parity-partition refinement}
    \label{alg:recursive-partition}
    \KwInput{Graph $\Gamma = (V,E)$ , fixed node $v \in V$}
    \KwOutput{partition $\underset{v, \text{final}}{\text{\ScissorHollowRight}}$}

    % $(D,A) \gets \mathrm{BFS}(\Gamma, s_{0})$\;
    % Check $D$ to obtain set of nodes $S'$ which violates the parity assumption in Equation \ref{eq:ParityAssumption}\;
    % \For{$u \in S'$}
    % {
    %     Extend $\Gamma$ to $\widehat{\Gamma}$ based on Section IV.A in \cite{TBFS} 
    % }
    %$\mathcal{T}\gets \emptyset$\ \tcp*{\small General Mapping}
    $\underset{v, \text{current}}{\text{\ScissorHollowRight}} \gets \texttt{Partition}(v)$ \tcp*{\small Vertex Partition}
    $S \gets \{v\}$ \tcp*{\small Utilized singleton nodes}
    $R \gets \left\{h \in V_{i} \mid \abs{V_{i}} = 1, V_{i} \in \underset{v, \text{current}}{\text{\ScissorHollowRight}} \right\}$ \; \tcp*{\small Singletons from initial partition}
    % \SetKwFunction{FixNode}{FixNode}
    % \SetKwProg{Fn}{Function}{:}{}
    % \Fn{\FixNode{$s$}}
    \While{$R \neq \emptyset$}
    {
        $w \gets R.\texttt{pop}()$\; 
        \If{$w \in S$}
        {
            \textbf{Continue}\;
        }
        $S \gets S \cup \{w\}$\;
        \tcp{Perform Partition}
        $\text{\ScissorHollowRight}_w\gets \texttt{Partition}(w)$\;
        % Add $\mathcal{T}_{s}$ to $\mathcal{T}$ \tcp*{\small Update mapping $\mathcal{T}$}
        \tcp{Perform Refinement}
        $\underset{v, \text{current}}{\text{\ScissorHollowRight}} \gets \texttt{Refinement}\left(\underset{v, \text{current}}{\text{\ScissorHollowRight}}, \text{\ScissorHollowRight}_w\right)$\;
        \tcp{Update singleton list}
        $R \gets R \cup \left\{h \in V_{i} \mid \abs{V_{i}} = 1, V_{i} \in \underset{v, \text{current}}{\text{\ScissorHollowRight}}, h \notin S \right\}$\;
        % \For{$u \in \widehat{\Gamma}(V) \setminus \{s\}$}
        % {
        %     \If{$|\mathcal{T}_{s}(u)|=1$}
        %     {
        %         $R\gets R\cup \{u\}$ \tcp*{\small Add singleton}
        %     }
        % }
        \If{
        $\abs{\underset{v, \text{current}}{\text{\ScissorHollowRight}}} = \abs{V} - 1$}
        {
            $\mathbf{Break}$\;
        }
        % $(\mathrm{done},\mathcal{X}_{s_0}) \gets \texttt{Matching}(\mathcal{T})$\;
        % \If{$\mathrm{done}=\mathrm{True}$}
        % {
        %     \Return $(\mathrm{True},\mathcal{X}_{s_0})$\;
        % }
        % \For{$u\in R$}
        % {
        %     \If{$u\notin S$}
        %     {
        %         \tcp{\small Enter recursion}
        %         $(\mathrm{done},\mathcal{X}_{s_0})\gets \FixNode{u}$\;
        %         \If{$\mathrm{done} = \mathrm{True}$}
        %         {
        %             \Return $(\mathrm{True},\mathcal{X}_{s_0})$
        %         }
                
        %     }
        % }
        % \tcp{If not fully partitioned after fully iterating through all nodes}
        % \If{$\abs{S} = \abs{\widehat{\Gamma}(V)} - 1$}
        % {
        %     \Return $(\mathrm{False}, \mathcal{X}_{s_0})$
        % }
        % \Return $(\mathrm{done},\mathcal{X}_{s_0})$\;
    }
    \Return $\underset{v, \text{current}}{\text{\ScissorHollowRight}}$\;
\end{algorithm}

\begin{rmk}\label{rmk:Optimality}
While the partitions produced by Algorithm~\ref{alg:recursive-partition} yield highly effective structural decompositions, as demonstrated by the empirical findings in Section~\ref{sec:NumResults}, the theoretical results in~\cite{TBFS} do not preclude the existence of linear combinations of Pauli-$X$ operators with smaller support inside the standard reduced dynamical Lie algebra. Consequently, we do not claim global optimality for the resulting partition sizes. 
\end{rmk}

% \subsection{Complexity Analysis}
% \label{sec:complexity_analysis}

\section{Numerical Results}
\label{sec:NumResults}
In this section, we investigate the partition structure of the vertices in random acyclic graphs produced by Algorithm~\ref{alg:recursive-partition} (see Section~\ref{sec:Method}). Specifically, the vertices are partitioned according to whether the corresponding uniform linear combinations of Pauli $X$ operators belong to the standard reduced DLA $\mathfrak{g}^{\,v}_{\Gamma,\mathrm{std}}$ (see Appendix~\ref{sec:groundwork}). Our experimental dataset consists of $1000$ random acyclic graphs generated from uniformly random Prüfer sequences, with sizes ranging from $20$ to $1000$ vertices. In all of our experiments, all $n$ choices of fixed vertices are evaluated. We begin by introducing several preliminary definitions and results.
% Unless otherwise specified, these are Erdős–Rényi graphs generated with an edge-creation probability of $p = \frac{\log n}{n}$, where $n$ denotes the number of nodes. This parameterization ensures that the generated graphs are sparse~\cite{erdHos1960evolution}.
%thereby allowing for a diversified path parity-degree profiles.
% Importantly, MaxCut remains NP-hard even when restricted to sparse graph classes (see~\cite[Theorem~13]{Yan1978} and the references therein).

\subsection{Symmetry Invariance and Vertex Splittability}

To rigorously characterize the structure of the reduced dynamical Lie algebras, we exploit  the symmetry properties of the underlying graph.

Let $\Gamma=(V,E)$ be a connected graph on $n$ vertices, and let $\mathrm{Aut}(\Gamma)$ denote its automorphism group, acting on the Hilbert space $W=(\mathbb{C}^2)^{\otimes n}$ by permuting the tensor factors. 
For any designated vertex $v\in V$, the standard reduced dynamical Lie algebra $\mathfrak{g}^{v}_{\Gamma,\mathrm{std}}$ is invariant under the action of the stabilizer subgroup:
\begin{equation}
\mathrm{Stab}(v) := \{\sigma \in \mathrm{Aut}(\Gamma) : \sigma(v)=v\}.
\end{equation}
This structural result appears as part of Lemma~\ref{lem:DLAinvariance}. 

The interaction between the stabilizer subgroup and the recursive refinement of the vertex set motivates our core classification framework. We formalize this via the following algorithmic partition properties.

\begin{defn}
\label{defn:XgateSplit}
Let $\Gamma = (V,E)$ be a graph and let $v \in V$ be a fixed target vertex. 
Define an equivalence relation $\sim_v$ on $V$ by
\[
u \sim_v w \quad\Longleftrightarrow\quad \exists\,\sigma \in \mathrm{Stab}(v) \text{ such that } \sigma(u) = w.
\]

The pair $(\Gamma, v)$ is said to be:
\begin{itemize}
    \item \emph{orbit-coherent} if the cells of the vertex partition $\underset{v}{\text{\ScissorHollowRight}}$ (see Eq.~\eqref{eq:Partition}) coincide exactly with the orbits of $\mathrm{Stab}(v)$ on $V \setminus \{v\}$. That is, the partition matches the quotient set:
    \begin{equation}
        \label{eq:Xgate_attainable_partition}
        \underset{v,\max}{\text{\ScissorHollowRight}} = (V \setminus \{v\}) / \sim_v.
    \end{equation}
    
    \item \emph{fully resolved} if the resulting vertex partition $\underset{v}{\text{\ScissorHollowRight}}$ refines completely into singletons.
\end{itemize}  

In other words, Algorithm~\ref{alg:recursive-partition} achieves full symmetry reduction when $(\Gamma, v)$ is orbit-coherent, and achieves maximal symmetry breaking (complete vertex identification) when $(\Gamma, v)$ is fully resolved, meaning every vertex is uniquely distinguished.
\end{defn}

\subsection{Performance Metrics and Statistics}

We now introduce several quantitative metrics to measure how closely the output of Algorithm~\ref{alg:recursive-partition} approximates full Pauli-$X$ splitting (see Definition~\ref{defn:XgateSplit}), particularly in scenarios where the theoretical maximum splitting is not achieved.

\begin{defn}
\label{defn:SplittingStatistics}
Let $\mathcal{S}$ be a sample drawn from a family of acyclic graphs on $n$ nodes. 

\begin{itemize}
    \item \textbf{Vertex-level statistics.} For a graph $\Gamma\in \mathcal{S}$ and a fixed distinguished vertex $v\in V$, let $k_v$ denote the number of subsets in the vertex partition in Eq.~\eqref{eq:Partition} produced by Algorithm~\ref{alg:recursive-partition}. We define the average subset size $\mathcal{A}_{(\Gamma,v)}$ and the number of singleton subsets $\mathcal{B}_{(\Gamma,v)}$ by
    \begin{equation}
    \label{eq:avg_subset_size_vertex}
    \begin{aligned}
        \mathcal{A}_{(\Gamma,v)} &:= \frac{n-1}{k_v}, \\
        \mathcal{B}_{(\Gamma,v)} &:= \#\{V_i \mid |V_i|=1\}.
    \end{aligned}
    \end{equation}
    Thus, $\mathcal{A}_{(\Gamma,v)}$ measures the average size of the resulting algorithmic components, while $\mathcal{B}_{(\Gamma,v)}$ tracks how many vertices are completely isolated into singleton classes.

    \item \textbf{Graph-level statistics.} For a fixed graph $\Gamma$, we average these vertex-level quantities over all choices of the distinguished vertex $v$:
    \begin{equation}
    \label{eq:avg_subset_size_graph}
    \begin{aligned}
        \mathcal{A}_{\Gamma} &:= \frac{1}{n}\sum_{v\in V} \mathcal{A}_{(\Gamma,v)}, \\
        \mathcal{B}_{\Gamma} &:= \frac{1}{n}\sum_{v\in V} \mathcal{B}_{(\Gamma,v)}.
    \end{aligned}
    \end{equation}

    \item \textbf{Sample-level statistics.} Averaging the graph-level statistics over the entire sample $\mathcal{S}$ yields
    \begin{equation}
    \label{eq:avg_subset_size_sample}
    \begin{aligned}
         \mathcal{A}_\mathcal{S} &:= \frac{1}{|\mathcal{S}|}\sum_{\Gamma\in \mathcal{S}} \mathcal{A}_{\Gamma}, \\
         \mathcal{B}_\mathcal{S} &:= \frac{1}{|\mathcal{S}|}\sum_{\Gamma\in \mathcal{S}} \mathcal{B}_{\Gamma}.
    \end{aligned}
    \end{equation}

    \item \textbf{Maximal attainable benchmarks.} To gauge the performance of the algorithm against the theoretical upper bound, we introduce analogous metrics for the maximal splitting partition. Let $\underset{v,\max}{\text{\ScissorHollowRight}}$ be the partition of vertices determined by the orbits of the stabilizer of $v$ (see Eq.~\eqref{eq:Xgate_attainable_partition}), and let $m_v$ be the number of subsets in this partition. We define the vertex, graph, and sample-level benchmarks as:
    \begin{equation}
    \label{eq:avg_subset_size_sampleMax}
    \begin{aligned}
        \mathcal{A}_{(\Gamma,v,\text{max})} &:= \frac{n-1}{m_v},\\ \mathcal{B}_{(\Gamma,v,\text{max})} &:= \#\{V_i \mid |V_i|=1\}, \\
        \mathcal{A}_{\Gamma,\text{max}} &:= \frac{1}{n}\sum_{v\in V} \mathcal{A}_{(\Gamma,v,\text{max})}, \\ \mathcal{B}_{\Gamma,\text{max}} &:= \frac{1}{n}\sum_{v\in V} \mathcal{B}_{(\Gamma,v,\text{max})}, \\
        \mathcal{A}_{\mathcal{S},\text{max}} &:= \frac{1}{|\mathcal{S}|}\sum_{\Gamma\in \mathcal{S}} \mathcal{A}_{\Gamma,\text{max}}, \\
        \mathcal{B}_{\mathcal{S},\text{max}} &:= \frac{1}{|\mathcal{S}|}\sum_{\Gamma\in \mathcal{S}} \mathcal{B}_{\Gamma,\text{max}}.
    \end{aligned}
    \end{equation}
\end{itemize}
\end{defn}

These metrics provide a global quantitative assessment of the partition refinement achieved by Algorithm~\ref{alg:recursive-partition}. Specifically, smaller values of $\mathcal{A}_{\mathcal{S}}$ and larger values of $\mathcal{B}_{\mathcal{S}}$ indicate that the algorithm's output closely approaches the optimal refinement benchmarks, $\mathcal{A}_{\mathcal{S},\text{max}}$ and $\mathcal{B}_{\mathcal{S},\text{max}}$.

The resulting statistics for Erd\H{o}s-R\'enyi graph samples appear in Tables~\ref{table:SplittingStatisticsTable} and~\ref{table:SplittingStatisticsTable2}.

\begin{rmk}
\label{rmk:splitting_statistics}
The results in Table~\ref{table:SplittingStatisticsTable} indicate that the average subset size, $\mathcal{A}_{\mathcal{S}}$, stabilizes around $1.27$. Meanwhile, the average number of singleton cells in the resulting partition, $\mathcal{B}_{\mathcal{S}}$, exceeds $64\%$ of the total number of vertices.
\end{rmk}

\begin{table*}%[H]
\centering
\resizebox{1.0\linewidth}{!}{
\begin{tabular}{|c|c|c|c|c|c|c|c|c|c|c|c|c|c|c|c|c|c|c|}
\hline
$n$ & $20$ & $30$ & $40$ & $50$ & $60$ & $70$ & $80$ & $90$ & $100$ & $200$ & $300$ & $400$ & $500$ & $600$ & $700$ & $800$ & $900$ & $1000$\\
\hline

$\mathcal{A}_{\mathcal{S}}$ &
1.341 &
1.318 &
1.302 &
1.290 &
1.286 &
1.279 &
1.274 &
1.278 &
1.276 &
1.272 &
1.269 &
1.270 & 1.271 & 1.271  & 1.269 & 1.270 & 1.268 & 1.270\\
\hline
$\mathcal{B}_{\mathcal{S}}$
& 11.03 & 17.38 & 24.00 & 30.56 & 37.12 & 43.37 & 50.15 & 56.94 & 62.86 & 126.87 & 191.97 & 256.82 & 321.08 & 385.57 & 450.01 & 514.77 & 579.65 & 643.16\\
\hline
% $\mathcal{D}_{\mathcal{S}}$
% &  &  &  &  &  &  & &  &  & & &  &  &  &  &  &  & \\
% \hline
\end{tabular}
}
\caption{Average refinement statistics for partitions generated by Algorithm~\ref{alg:recursive-partition} across $1,000$ random connected acyclic graph samples with $n = 20$ to $1,000$ vertices.}
\label{table:SplittingStatisticsTable}
\end{table*}

\begin{table*}%[H]
\centering
\resizebox{1.0\linewidth}{!}{
\begin{tabular}{|c|c|c|c|c|c|c|c|c|c|c|c|c|c|c|c|c|c|c|}
\hline
$n$ & $20$ & $30$ & $40$ & $50$ & $60$ & $70$ & $80$ & $90$ & $100$ & $200$ & $300$ & $400$ & $500$ & $600$ & $700$ & $800$ & $900$ & $1000$\\
\hline

$\mathcal{A}_{\text{max}}/\mathcal{A}_{\mathcal{S}} $ &
0.874 &
0.887 &
0.902 &
0.906 &
0.913 &
0.910 &
0.916 &
0.920 &
0.915 &
0.917 &
0.918 &
0.921 & 0.921 & 0.922  & 0.921 & 0.922 & 0.922 & 0.921\\
\hline
$\mathcal{B}_{\mathcal{S}}/ \mathcal{B}_{\text{max}}$
& 0.936 & 0.943 & 0.950 & 0.952 &  0.955 & 0.953 & 0.956 & 0.958 & 0.955 & 0.956 & 0.957 & 0.958 & 0.958 & 0.958 & 0.958 & 0.958 & 0.958 & 0.959\\
\hline
% $\mathcal{D}_{\mathcal{S}}$
% &  &  &  &  &  &  & &  &  & & &  &  &  &  &  &  & \\
% \hline
\end{tabular}
}
\caption{Average ratios of refinement statistics for partitions generated by Algorithm~\ref{alg:recursive-partition} relative to optimal partitions under automorphism group actions. Results are based on $1,000$ random connected acyclic graph samples with $n = 20$ to $1,000$ vertices.}
\label{table:SplittingStatisticsTable2}
\end{table*}

\section{Practical Implications}
\label{sec: implications}
We now examine the implications of Algorithm~\ref{alg:recursive-partition},
introduced in this work, for the structure of dynamical Lie algebras.
\begin{defn}
    \label{defn:reduced_graph}
    The \emph{reduced graph} $\Gamma_v$ is the subgraph obtained from $\Gamma$ by removing the vertex $v$ along with all its incident edges.
\end{defn}
Our analysis relies on the containment of the free DLA of the reduced
graph in the corresponding free reduced DLA,
\begin{equation}
\label{eq:reduced_dla_containment}
\mathfrak{g}_{\Gamma_v,\mathrm{free}}
\subseteq
\mathfrak{g}^v_{\Gamma,\mathrm{free}},
\end{equation}
together with the following direct-sum decomposition. If the reduced
graph has $m$ connected components,
\[
\Gamma_v=\Gamma_1\sqcup\Gamma_2\sqcup\cdots\sqcup\Gamma_m,
\]
then, directly from the definition of the free reduced DLA,
\begin{equation}
\label{eq:free_dla_direct_sum}
\mathfrak{g}^v_{\Gamma,\mathrm{free}}
=
\mathfrak{g}^v_{\widehat{\Gamma}_1,\mathrm{free}}
\oplus
\mathfrak{g}^v_{\widehat{\Gamma}_2,\mathrm{free}}
\oplus\cdots\oplus
\mathfrak{g}^v_{\widehat{\Gamma}_m,\mathrm{free}},
\end{equation}
where $\widehat{\Gamma}_j$ is the subgraph obtained by augmenting $\Gamma_j$ with $v$ through the unique edge $(v,v_j)$ in $\Gamma$ connecting $v$ to $\Gamma_j$. Here, $v_j$ denotes the vertex in $\Gamma_j$ adjacent to $v$. We then combine
\eqref{eq:reduced_dla_containment}--\eqref{eq:free_dla_direct_sum}
with the classification of the Lie algebras appearing as the direct-sum
constituents in \eqref{eq:free_dla_direct_sum} to characterize the
expressivity of free reduced parameterized QAOA circuits on tree graphs
within the corresponding reduced Hilbert spaces.

The following result is essentially a reformulation of Proposition $26$
in~\cite{KLFCCZ}.
\begin{thm}
\label{thm:FreeDLAExtensionBySingleZ}
Let $\Gamma = (V_1 \sqcup V_2, E)$ be a connected bipartite graph that is not a path graph, and let $v \in V_1$ be a vertex.  Define the extension of the free DLA $\mathfrak{g}_{\Gamma,\mathrm{free}}$ by the additional generator $iZ_v$ as
\begin{equation}
\label{eq:FreeDLAExtensionBySingleZDefinition}
\operatorname{Ext}_v
(\mathfrak{g}_{\Gamma,\mathrm{free}})
:=
\left\langle
\mathfrak{g}_{\Gamma,\mathrm{free}},\, iZ_v
\right\rangle_{\mathrm{Lie}}.
\end{equation}
Then
\begin{equation}
\label{eq:FreeDLAExtensionBySingleZClassification}
\operatorname{Ext}_v
(\mathfrak{g}_{\Gamma,\mathrm{free}})
\simeq
\begin{cases}
\mathfrak{sp}\!\left(2^n\right),
&
v\in V_1 \text{ and } |V_1| \text{ is odd},
\\[2mm]
\mathfrak{so}\!\left(2^n\right),
&
 \text{ otherwise}.
\end{cases}
\end{equation}
\end{thm}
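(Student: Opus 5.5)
Because the statement is a reformulation of Proposition~26 in~\cite{KLFCCZ}, I will reduce it to the known classification of the free MaxCut DLA (Theorem~1 of~\cite{KLFCCZ}) together with a symmetry-breaking argument. Write $n=|V|$. The plan has four steps: (i) identify $\mathfrak{g}_{\Gamma,\mathrm{free}}$ and the symmetries it commutes with; (ii) show that adjoining $iZ_v$ destroys the $\mathbb{Z}_2$ bit-flip symmetry but keeps an (anti)symmetric bilinear form; (iii) show the resulting algebra is irreducible and contains a generating set rich enough to be all of $\mathfrak{so}$ or $\mathfrak{sp}$; (iv) decide which form survives from the parity of $|V_1|$.

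\emph{Symmetries of the free algebra.} For a connected bipartite non-path graph, Theorem~1 of~\cite{KLFCCZ} gives
\[
\mathfrak{g}_{\Gamma,\mathrm{free}}\simeq \mathfrak{so}(2^{n-1})\oplus\mathfrak{so}(2^{n-1})
\quad\text{or}\quad
\mathfrak{sp}(2^{n-2})\oplus\mathfrak{sp}(2^{n-2}).
\]
The two summands are the $\pm1$ eigenspaces of the global flip $P=\prod_w X_w$, which commutes with every $X_w$ and every $Z_wZ_{w'}$. The generator $iZ_v$ anticommutes with $P$, so it exchanges the two eigenspaces. In addition, all generators are real matrices and antisymmetric after multiplication by $i$, so complex conjugation combined with a suitable Pauli string $Q$ gives a bilinear form $B(x,y)=x^{T}Qy$ that every generator preserves.

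\emph{Choosing the form.} I will take $Q=\prod_{w\in V_1}Y_w$, or a variant. The requirements are that $Q$ commutes or anticommutes appropriately with each $X_w$ and $Z_wZ_{w'}$: an edge $ZZ$ has exactly one endpoint in $V_1$, so it picks up a single sign. I also need $(iZ_v)^{T}Q=-Q(iZ_v)$, which holds because $Z_v$ anticommutes with $Y_v$ for $v\in V_1$. The form $B$ is symmetric or antisymmetric according to $Q^{T}=\pm Q$, and $Y^{T}=-Y$ gives sign $(-1)^{|V_1|}$. Hence the extended algebra lies in $\mathfrak{sp}(2^n)$ when $|V_1|$ is odd and in $\mathfrak{so}(2^n)$ when it is even. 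The case $v\in V_2$ needs a separate check: there the analogous form from $V_2$, or a mixed choice, must give the symmetric case. I will fix this by testing which Pauli string $\prod_{w\in S}Y_w\prod_{w\notin S}X_w$ satisfies all the sign constraints.

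\emph{Lower bound and the main obstacle.} I must show the extension is the whole $\mathfrak{so}(2^n)$ or $\mathfrak{sp}(2^n)$. The steps are:
\begin{enumerate}
\item Note that the extension contains $\mathfrak{k}=\mathfrak{g}_{\Gamma,\mathrm{free}}$, which is block-diagonal in the $P$-eigenbasis.
\item Observe that $iZ_v$ generates an off-diagonal $\mathfrak{k}$-module $\mathfrak{p}=[\mathfrak{k},iZ_v]$-span.
\item Show that $\mathfrak{k}$ acts irreducibly on $\mathfrak{p}$.
\item Conclude that $\mathfrak{k}\oplus\mathfrak{p}$ is a symmetric pair, and then match the dimension count $\dim\mathfrak{k}+\dim\mathfrak{p}$ against the Cartan decompositions $\mathfrak{so}(2N)\supset\mathfrak{so}(N)\oplus\mathfrak{so}(N)$ or $\mathfrak{sp}(2N)\supset\mathfrak{sp}(N)\oplus\mathfrak{sp}(N)$.
\end{enumerate}
The main obstacle is step~3, proving $\mathfrak{p}$ is the full off-diagonal space. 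The off-diagonal block space is $\mathrm{Hom}(W_+,W_-)$ restricted by the form, i.e.\ a tensor product of the two defining representations, and this is irreducible for $\mathfrak{so}\oplus\mathfrak{so}$ or $\mathfrak{sp}\oplus\mathfrak{sp}$. It therefore suffices that $iZ_v$ has a nonzero component there, which is immediate, so the module generated is the whole space. The non-path hypothesis is exactly what guarantees, via~\cite{KLFCCZ}, that $\mathfrak{k}$ is the full $\mathfrak{so}\oplus\mathfrak{so}$ or $\mathfrak{sp}\oplus\mathfrak{sp}$ rather than a smaller algebra.
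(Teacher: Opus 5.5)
Your route differs from the paper's, which gives no argument of its own: it cites Proposition~26 of \cite{KLFCCZ} and exhibits the invariant form $S_v=\bigl(\prod_{u\in V_1}Y_u\bigr)\bigl(\prod_{u\in V_2}Z_u\bigr)$ with $S_v^T=(-1)^{|V_1|}S_v$. Deriving the result from Theorem~1 of \cite{KLFCCZ}, with an upper bound from the form and a lower bound from the $P$-grading, is a sensible plan. As written, however, it has a genuine gap.

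You quote Theorem~1 as giving $\mathfrak{so}\oplus\mathfrak{so}$ or $\mathfrak{sp}\oplus\mathfrak{sp}$, but that holds only for $n$ even. When $|V_1|$ and $|V_2|$ have different parities, so that $n$ is odd (e.g.\ the star $K_{1,4}$), $\mathfrak{g}_{\Gamma,\mathrm{free}}$ is a single $\mathfrak{su}(2^{n-1})$ acting on $W_+$ and dually on $W_-$. The form explains why. Since $S_vP=(-1)^nPS_v$, for odd $n$ the sectors $W_\pm$ are $\beta_v$-isotropic and paired with each other. The $P$-even part of the target is then $\mathfrak{u}(N)$ with $N=2^{n-1}$. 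The $P$-odd part is the realification of $\Lambda^2\mathbb{C}^N$ (orthogonal case) or $S^2\mathbb{C}^N$ (symplectic case), not a tensor product of two defining representations. So your steps~3--4 fail here: $\mathfrak{k}$ misses the central element $J=i\prod_w X_w$ of $\mathfrak{u}(N)$, and $\dim\mathfrak{k}+\dim\mathfrak{p}$ is exactly one less than the target dimension.

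To close this case you need two further facts:
\begin{enumerate}
\item These modules remain irreducible over $\mathfrak{su}(N)$ (they are of complex type since $N\ge 16$), so $iZ_v$ generates all of $\mathfrak{p}$.
\item $[\mathfrak{p},\mathfrak{p}]$ produces $J$. For $0\neq A\in\mathfrak{p}$, the element $C=[J,A]$ lies in $\mathfrak{p}$ and $\operatorname{tr}(J[A,C])=\operatorname{tr}(C^2)<0$, while $J$ is trace-orthogonal to $\mathfrak{k}$.
\end{enumerate}
For even $n$ your argument does work, once you note that $W_+\perp W_-$ under $\beta_v$ and that $\mathfrak{k}$ equals (not merely is isomorphic to) the $P$-even subalgebra, by containment plus dimension.

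Your candidate form is also wrong. Each generator is $iH$ with $H$ real symmetric, so invariance of $x^TQy$ requires $Q$ to anticommute with every $H$. The choice $\prod_{w\in V_1}Y_w$ commutes with $X_w$ for $w\in V_2$. Your fallback family $\prod_{w\in S}Y_w\prod_{w\notin S}X_w$ contains no valid choice either: a site carrying $X$ commutes with that $X_w$, and $S=V$ fails on every edge. In fact every site must carry $Y$ or $Z$, each edge term forces exactly one $Y$ among its endpoints, and $iZ_v$ puts the $Y$'s on the side of $v$. This gives exactly $S_v$, and your parity conclusion survives because $Z^T=Z$.

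Finally, $v\in V_2$ is excluded by hypothesis, so ``otherwise'' just means $|V_1|$ even. Your expectation for that case is also false: the form would be $\prod_{V_2}Y\prod_{V_1}Z$ with sign $(-1)^{|V_2|}$, giving $\mathfrak{sp}$ whenever $|V_2|$ is odd.
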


\begin{rmk}
\label{rmk:InvariantBilinearFormSingleZExtension}
Consider the Pauli string
\begin{equation}
\label{eq:SvDefinition}
S_v:=
\left(\prod_{u\in V_1}Y_u\right)
\left(\prod_{u\in V_2}Z_u\right).
\end{equation}
Since $S_v$ is a Pauli string, it is invertible and therefore defines
a nondegenerate complex bilinear form on
$W=(\mathbb C^2)^{\otimes n}$ by
\begin{equation}
\label{eq:BilinearFormSv}
\beta_v(\psi,\phi)
:=
\psi^T S_v\phi.
\end{equation}
Its symmetry type is determined by the parity of $|V_1|$. Indeed,
using $Y^T=-Y$ and $Z^T=Z$, we obtain
\begin{equation}
\label{eq:SvTranspose}
S_v^T
=
(-1)^{|V_1|}S_v.
\end{equation}
Hence $\beta_v$ is symmetric when $|V_1|$ is even and skew-symmetric when $|V_1|$ is odd. This is precisely the invariant bilinear form preserved by $\operatorname{Ext}_v(\mathfrak{g}_{\Gamma,\mathrm{free}})$,
corresponding to the orthogonal and symplectic cases, respectively, in Theorem~\ref{thm:FreeDLAExtensionBySingleZ}.
\end{rmk}

\begin{rmk}
\label{rmk:TwoVertexExtension}
Let $\Gamma$ be the graph on two vertices $v,w$ with a single edge
$(v,w)$. Then, by Proposition~18 in~\cite{KLFCCZ},
\begin{equation}
\label{eq:TwoVertexFreeDLAExtension}
\operatorname{Ext}_v
\bigl(\mathfrak{g}_{\Gamma,\mathrm{free}}\bigr)
\simeq
\mathfrak{so}(5)\simeq \mathfrak{sp}(4).
\end{equation}

Moreover, the corresponding connected Lie group, $Sp(4)$, acts transitively on
the unit sphere of the two-qubit Hilbert space $W$, up to a global phase.
\end{rmk}

Combining Algorithm~\ref{alg:recursive-partition} with these observations and the free DLA classification from~\cite{KLFCCZ} yields several controllability and expressivity results for reduced multi-angle QAOA ansätze, as detailed below.

\subsection{Fully Resolved Pairs and Pure-State Controllability}

Whenever a given pair $(\Gamma,v)$ is fully resolved (see
Definition~\ref{defn:XgateSplit}), the standard reduced DLA coincides
with the free reduced DLA. Furthermore, since $\Gamma$ is a tree, each connected component
$\Gamma_j$, $j=1,\dots,m$, of the reduced graph $\Gamma_v$ is bipartite
and is connected to $v$ by a single edge; otherwise, $\Gamma$ would
contain a cycle.

Consequently,
\[
\mathfrak{g}^v_{\widehat{\Gamma}_j,\mathrm{free}}
=
\operatorname{Ext}_{v_j}
\bigl(\mathfrak{g}_{\Gamma_j,\mathrm{free}}\bigr),
\]
and Eq.~\eqref{eq:free_dla_direct_sum} can be written as
\begin{equation}
\label{eq:FreeReducedDLAExtensionDirectSum}
\mathfrak{g}^v_{\Gamma,\mathrm{free}}
=
\operatorname{Ext}_{v_1}
\bigl(\mathfrak{g}_{\Gamma_1,\mathrm{free}}\bigr)
\oplus
\operatorname{Ext}_{v_2}
\bigl(\mathfrak{g}_{\Gamma_2,\mathrm{free}}\bigr)
\oplus\cdots\oplus
\operatorname{Ext}_{v_m}
\bigl(\mathfrak{g}_{\Gamma_m,\mathrm{free}}\bigr).
\end{equation}

Provided that no connected component $\Gamma_j$ is a path graph on more than two vertices, Theorem~\ref{thm:FreeDLAExtensionBySingleZ} and Remark~\ref{rmk:TwoVertexExtension} imply that each extended Lie algebra $\operatorname{Ext}_v\bigl(\mathfrak{g}_{\Gamma_j,\mathrm{free}}\bigr)$ is isomorphic to either $\mathfrak{so}(2^{|\Gamma_j|})$ or $\mathfrak{sp}(2^{|\Gamma_j|})$, which correspond respectively to the orthogonal and symplectic Lie groups, $\mathrm{SO}(2^{|\Gamma_j|})$ and $\mathrm{Sp}(2^{|\Gamma_j|})$. Recall that in the latter case, the corresponding Lie group action acts transitively (up to a global phase) on the unit sphere of the subspace $W_{\Gamma_j}$, whereas in the former case it does not (see, e.g., Theorem~2 in~\cite{AD}).

%Let $\ket{t}, \ket{h} \in \bigotimes\limits_{j=1}^m W_{\Gamma_j}$ be normalized states. The associated connected Lie group corresponding to the Lie algebra $\operatorname{Ext}_v\bigl(\mathfrak{g}_{\Gamma_j,\mathrm{free}}\bigr)$ acts transitively on the  unit sphere $\mathcal{S}(W_{\Gamma_j})$ (up to global phase).

Consequently, consider any two states $\ket{t}$ and $\ket{h}$ which both lie within the product subspace
\begin{equation}
\label{eq:cartesian_product_spheres}
\bigotimes_{j=1}^m W_{\Gamma_j}
\end{equation}
such that each tensor factor has unit norm, and the states differ only in factors corresponding to subgraphs $\Gamma_j$ for which the vertex set containing $v$ in the extended bipartite graph $\widehat{\Gamma}_j$ has odd cardinality. It follows from~\cite{dalessandro2009} that there exists a positive integer $L \in \mathbb{Z}_{>0}$ such that an $L$-layer parameterized reduced QAOA circuit $U(\boldsymbol{\beta},\boldsymbol{\gamma})$ (defined in Eq.~\eqref{qaoa-chain}) achieves exact state preparation up to a global phase $\phi$:
\begin{equation}
\label{eq:exact_state_prep_xgate}
U(\boldsymbol{\beta},\boldsymbol{\gamma})\ket{h} = e^{i\phi}\ket{t}.
\end{equation}
When the standard and free reduced DLAs coincide, this state preparation guarantee holds for the standard reduced QAOA circuit as well.

\begin{rmk}
When $v$ is a leaf, the reduced graph $\Gamma_v$ is connected ($m=1$) and bipartite. Let $v'$ denote the unique neighbor of $v$ in $\Gamma$, and let $V_{v'}$ be the bipartite partition block of $\Gamma_v$ containing $v'$. If $\Gamma_v$ is not a path graph on more than two vertices and $|V_{v'}|$ is odd, the associated dynamical Lie group acts transitively (up to a global phase) on the unit sphere of the reduced Hilbert space $W_v$. In quantum control theory, this transitivity corresponds to pure-state controllability on $W_v$.
\end{rmk}

\subsection{Partial Splitting and Subsystem Controllability}

Even when a pair $(\Gamma,v)$ is not fully resolved, the partial partition structure still yields non-trivial control algebraic guarantees. Specifically, consider the subset of singleton vertices identified by the partition for a given vertex $v$:
\begin{equation}
    \label{eq:subset_of_singletons}
    \mathcal{S}_{\Gamma,v}:=\{w\in V\setminus\{v\}\mid iX_w \in \mathfrak{g}^v_{\Gamma,\mathrm{std}}\}.
\end{equation}
For any connected subgraph $\Gamma' \subset \Gamma$ supported on $\mathcal{S}_{\Gamma,v}$, the nested commutator relation from Eq.~\eqref{eq:RecoverZZ2} ensures the DLA containment:
\begin{equation}
    \label{eq:partial_inclusion}
    \mathfrak{g}_{\Gamma',\mathrm{free}} \subseteq \mathfrak{g}^v_{\Gamma,\mathrm{std}}.
\end{equation}
This containment immediately yields a rigorous lower bound on the dimension of the standard reduced DLA:
\begin{equation}
    \label{eq:partial_lower_bound}
    \dim(\mathfrak{g}^v_{\Gamma,\mathrm{std}}) \ge \dim(\mathfrak{g}_{\Gamma',\mathrm{free}}).
\end{equation}

Beyond dimension bounds, the embedding established in Eq.~\eqref{eq:partial_inclusion} guarantees subsystem controllability whenever the localized free dynamical Lie algebra acts transitively on the target subsystem Hilbert space. %As shown in the preceding subsection, transitivity (up to a global phase) on the unit sphere of each invariant parity subspace $W_{\Gamma',\mathrm{even}}$ and $W_{\Gamma',\mathrm{odd}}$ holds precisely when the bipartite vertex partition of $\Gamma'$ is not of even-even parity type.

Now consider any two normalized states $\ket{t}, \ket{h} \in W_v$ that differ only on the qubits in $\Gamma'$:
\begin{equation}
\label{eq:StateDecomposition}
\ket{t} = \ket{t_{\Gamma'}} \otimes \ket{\psi_{\setminus \Gamma'}}, \quad \ket{h} = \ket{h_{\Gamma'}} \otimes \ket{\psi_{\setminus \Gamma'}}.
\end{equation}
Note that as a subgraph of an acyclic graph, $\Gamma'$ is necessarily bipartite with vertex partition $V(\Gamma') = V_1 \sqcup V_2$. If it is not a path on more than two nodes and the parity pair $(|V_1|, |V_2|)$ is not even-even, then $\mathfrak{g}_{\Gamma',\mathrm{free}}$ is isomorphic to either $\mathfrak{sp}(W_{\Gamma',\mathrm{even}}) \oplus \mathfrak{sp}(W_{\Gamma',\mathrm{odd}})$ (if $|V_1|$ and $|V_2|$ are both odd) or $\mathfrak{su}(2^{|V(\Gamma')|-1})$ acting diagonally on the $W_{\Gamma',\mathrm{even}}$ and $W_{\Gamma',\mathrm{odd}}$ sectors (if $|V(\Gamma')|$ is odd); see Theorem~1 in \cite{KLFCCZ}. It follows that the associated connected Lie group $G_{\Gamma'} = \exp(\mathfrak{g}_{\Gamma',\mathrm{free}})$ acts transitively (up to a global phase) on the unit spheres inside the invariant subspaces $W_{\Gamma',\mathrm{even}}$ and $W_{\Gamma',\mathrm{odd}}$.

Provided the bipartite partition of $\Gamma'$ is not even-even and the subsystem states $\ket{t_{\Gamma'}}$ and $\ket{h_{\Gamma'}}$ lie within the same parity subspace ($W_{\Gamma',\mathrm{even}}$ or $W_{\Gamma',\mathrm{odd}}$), this transitivity property guarantees \cite{dalessandro2009} the existence of a layer depth $L \in \mathbb{Z}_{>0}$ such that an $L$-layer reduced QAOA circuit $U(\boldsymbol{\beta},\boldsymbol{\gamma})$ achieves exact subsystem state preparation up to a global phase $\phi$:
\begin{equation}
\label{eq:approximation_bound_phase}
U(\boldsymbol{\beta},\boldsymbol{\gamma})\ket{h} = e^{i\phi}\ket{t}.
\end{equation}

Although a pair $(\Gamma, v)$ is rarely fully resolved in large graphs, partial splittings are typically extensive. For example, in $1000$-vertex tree graphs, the singleton subset $\mathcal{S}_{\Gamma,v}$ contains over $600$ vertices on average. Consequently, the standard reduced DLA contains the free DLA of a large subgraph $\Gamma'$, enabling standard reduced QAOA to inherit subsystem controllability over a significant fraction of the qubits while leaving spectator qubits invariant.

\section{Conclusion}

In this work, we investigated the dynamical Lie algebras associated with the reduced variant of the Quantum Approximate Optimization Algorithm for the MaxCut problem on connected acyclic graphs. While MaxCut on trees is classically trivial, characterizing the precise quantum dynamics and expressivity of the corresponding ansatz remains a non-trivial and critical task. We focused specifically on reduced QAOA architectures arising from lossless reductions of the problem space, achieved by fixing a single bit value. Building upon the framework introduced in~\cite{TBFS}, we developed a recursive vertex parity-partition refinement algorithm (Algorithm~\ref{alg:recursive-partition}). This algorithm characterizes the structural aspects of reduced DLAs without requiring the computationally expensive explicit construction of the Lie closure, achieving an efficient worst-case time complexity of $\mathcal{O}\bigl(|V|^2\bigr)$.

In cases where Algorithm~\ref{alg:recursive-partition} completely refines the vertex set into singleton subsets, we establish that the reduced DLA coincides with the Lie algebra arising from the multi-angle QAOA (the free ansatz). %Under these conditions, the structure and dimensionality of the DLA can be directly inferred from the classification of the latter. 
Conversely, when full splitting does not occur, the partitioning procedure still yields significant structural insights by identifying specific vertices whose associated Pauli-$X$ terms belong to the reduced DLA. This induces subalgebra inclusions of the form as in Eq.~\eqref{eq:partial_inclusion}, providing explicit lower bounds on the total dimension of the reduced DLA. More precisely, the singleton ($1$-vertex) subsets in the final partition imply the inclusion of the free Lie algebras supported on those vertices. This capability is particularly powerful, as it allows us to systematically derive lower dimension bounds in a straightforward manner (cf.~Theorem~1 in~\cite{KLFCCZ}). Consequently, even partial splitting provides valuable, rigorous information regarding the algebraic structure and ultimate expressivity of the reduced QAOA ansatz.

%Our numerical results on acyclic sparse Erdős--Rényi graphs $\Gamma(n,p)$ with $p=\frac{\log n}{n}$ show that, for a large fraction of vertices $v \in V$, the pair ($\Gamma, v$) is $X$-gate splittable. In these cases, the induced partition refines into singleton sets, implying that the equality \eqref{eq:EqualityOfReducedDLAs} holds true. \BT{Fix!}

%Furthermore, when a pair $(\Gamma, v)$ is not $X$-gate splittable, the partition still identifies vertices for which the corresponding generators $iX_{w}$ lie in $g_{\Gamma,\text{std}}^{v}$. This leads to the inclusion described in Eq.~\ref{eq:partial_inclusion}, for suitable subgraphs $\Gamma' \subset \Gamma$. As a result, we can obtain lower bounds on the dimensions of the reduced DLA, even in the absence of full splitting.

%Under mild connectivity assumptions, this further implies that every vector in the reduced Hilbert space can be reached from the initial state under the action of the standard reduced DLA. Correspondingly, every state (i.e., unit vector in that Hilbert space) can be obtained via the associated dynamical Lie group. 
Additionally, the framework presented here carries significant algorithmic implications. Specifically, in the regime of large QAOA depth, the DLA dimension is closely linked to the variance of the QAOA loss landscape. It thus serves as a proxy for fundamental landscape properties, including expressivity and the potential onset of barren plateaus~\cite{FHCKYHSP,LJGCC,LCSMCC,RBSKMLC}.

More broadly, our algorithmic framework applies beyond acyclic graphs to any graph class fulfilling the hypotheses of Remark~\ref{rmk:TreesSatisfyParityAssumption} and Lemma~\ref{lem:refinement_closure}—most notably geodetic graphs (Remark~\ref{rmk:geodetic_graphs}). Applying Algorithm~\ref{alg:recursive-partition} to deduce the structural properties of dynamical Lie algebras on these broader graph topologies remains a promising direction for future work.
%Overall, this work demonstrates that symmetry-induced reductions, when viewed through a Lie-algebraic lens, provide a powerful and nuanced tool for understanding and designing QAOA circuits with good trainability-expressivity tradeoffs.

\section*{Acknowledgment}
This research was supported in part by NSF Grant \#2427042.

\bibliographystyle{unsrtnat}
\bibliography{References}

\appendix

\section{Groundwork}
\label{sec:groundwork}

In this appendix, we recall the structural ingredients underlying our analysis of reduced dynamical Lie algebras, as introduced in~\cite{TBFS}. In particular, we describe how graph-theoretic features can be used to identify \emph{distinguished elements} in the Lie algebra, which form the basis for our computational methods.

\textbf{Distance Layers and Parity-Degree Sequences.} Let $\Gamma = (V, E)$ be a connected graph, and let $v \in V$ be a distinguished vertex with a fixed associated bit value, serving as the root of the construction. 

\begin{defn}
\label{defn:vertexDistSubsets}
The distance layers of $\Gamma$ with respect to $v$ are defined by
\begin{equation}
    \label{eq:distLayers}
    \mathcal{N}_{v,j} := \{ w \in V \mid \mathrm{dist}(v,w) = j \}, \quad j \geq 1.
\end{equation}
\end{defn}

These layers partition the vertex set $V$ according to the shortest path distance from the fixed vertex $v$.

A further refinement of the distance layers $\mathcal{N}_{v,j}$ can be obtained according to the parity patterns of vertex degrees along shortest paths to the fixed node. 

\begin{defn}
\label{defn:vertexSubsets}
Let $v$ be a vertex in a connected graph $\Gamma$, and let $\mathbf{a} := (a_1, a_2, \dots, a_j) \in \mathbb{Z}_2^j$ be a binary sequence of length $j$. The subset $\mathcal{C}_{j, \mathbf{a}}^v \subseteq \mathcal{N}_{j}(v)$ is defined as the set of vertices $w$ such that there exists a shortest path 
\[
w_0 = v - w_1 - \cdots - w_j = w
\]
with vertex degree parities following the pattern prescribed by $\mathbf{a}$:

\begin{equation}
\label{eq:deg-paritySeq}
\deg(w_s) \equiv a_s \pmod{2} \quad \text{for all } 1 \leq s \leq j.    
\end{equation}
\end{defn}

Thus, $C^{v}_{j, \mathbf{a}}$ consists of vertices at distance $j$ from $v$ that share a common degree parity pattern along at least one shortest path from $v$. 
% These sets induce a partition (or, more generally, a covering) of each layer $\mathcal{N}_{v,j}$ into groups of vertices that are indistinguishable with respect to parity profiles.
There is a natural correspondence between the vertices in $\mathcal{N}_{v,j}$ and the subsets $\mathcal{C}_{j, a}^v$, whereby each vertex is associated with all subsets $\mathcal{C}_{j, a}^v$ to which it belongs. In particular, every vertex belongs to at least one such subset. %As studied in \cite{TBFS}, the unique partition of the vertices into these groups yields information about the reduced DLA, which we are interested in.
% \BT{In order to be more consistent with Bao's presentation of the algorithm, I would phrase it as ""}

As established in Theorems~IV.2 and~IV.8 of \cite{TBFS}, partitioning the vertices into these subsets reveals fundamental structural properties of the reduced dynamical Lie algebra. Specifically, the standard reduced DLA contains the elements:
\begin{equation}
\label{eq:FixedDistX1element}
\begin{aligned}
    \mathcal{X}_{v,j} &:= i \sum_{w \in \mathcal{N}_{j}(v)} X_w, \\
    \mathcal{X}_{\mathcal{C}^v_{j,\mathbf{a}}} &:= i \sum_{w \in \mathcal{C}^v_{j,\mathbf{a}}} X_w.
\end{aligned}
\end{equation}

These containments hold provided the following parity conditions are satisfied:
\begin{enumerate}
    \item For every $j \geq 1$, each vertex $w \in \mathcal{N}_{j}(v)$ has an odd number of neighbors in the preceding layer $\mathcal{N}_{j-1}(v)$:
    \begin{equation}
    \label{eq:ParityAssumption}
    \left| \{ u \in \mathcal{N}_{j-1}(v) \mid (u,w) \in E \} \right| \equiv 1 \pmod{2};
    \end{equation}

    \item For every $j \ge 1$, every $\mathbf{a} \in \mathbb{Z}_2^j$, and every $w \in \mathcal{C}^v_{j,\mathbf{a}}$, $w$ has an odd number of neighbors in $\mathcal{C}^v_{j-1,\mathbf{a}'}$:
    \begin{equation}
    \label{eq:ParitySeqAssumption}
    \left| \{ u \in \mathcal{C}^v_{j-1,\mathbf{a}'} \mid (u,w) \in E \} \right| \equiv 1 \pmod{2},
    \end{equation}
    where $\mathbf{a}' = (a_1,\dots,a_{j-1})$.
\end{enumerate}

Furthermore, the containment results of Eq.~\eqref{eq:FixedDistX1element} extend to any alternative vertex $v' \in V \setminus \{v\}$, provided that $iX_{v'} \in \mathfrak{g}^v_{\Gamma,\mathrm{std}}$ and the parity conditions as in Eq.~\eqref{eq:ParityAssumption}--~\eqref{eq:ParitySeqAssumption} are met with respect to $v'$. When these criteria are satisfied, the elements
\begin{equation}
\label{eq:FixedDistX2element}
\begin{aligned}
    \mathcal{X}_{v',j} &:= i \sum_{w \in \mathcal{N}_{j}(v')} X_{w};\\ 
    \mathcal{X}_{\mathcal{C}^{v'}_{j,\mathbf{a}}} &:= i \sum_{w \in \mathcal{C}^{v'}_{j,\mathbf{a}}} X_w
\end{aligned}
\end{equation}
are also contained within $\mathfrak{g}^v_{\Gamma,\mathrm{std}}$ (see Remark~IV.12 in \cite{TBFS}). 

\begin{rmk}
    \label{rmk:TreesSatisfyParityAssumption}
    Notably, the parity assumptions in Eqs.~\eqref{eq:ParityAssumption}--\eqref{eq:ParitySeqAssumption} are satisfied for all vertices whenever the underlying graph is acyclic. This follows because any pair of distinct vertices in an acyclic graph is connected by a unique simple path.
\end{rmk}

%\textbf{Distinguished Elements in Reduced DLAs.} The relevance of this partition arises from its connection to the structure of the reduced dynamical Lie algebra. In particular, one can associate to each subset $C^{v}_{j, \mathbf{a}}$ the operator
%\begin{equation}
%    X_{C^{v}_{j, \mathbf{a}}} := i \sum_{w \in C^{v}_{j, \mathbf{a}}} X_w,
%\end{equation}
%where $X_w$ denotes the Pauli-$X$ operator acting on vertex $w$. These operators can be shown to lie in the reduced dynamical Lie algebra generated by the QAOA Hamiltonians. As a consequence, the Lie algebra naturally contains collective operators supported on subsets of vertices sharing the same parity profile (see Theorem IV.8 in~\cite{TBFS}).
A key observation is that the partition of the vertex set $V$ induced by the subsets $\{C^{v}_{j, \mathbf{a}}\}$ directly influences the structure of the Lie algebra. In particular, if these subsets yield a partition of $V \setminus \{v\}$ into a disjoint union of singleton sets, i.e.,
\begin{equation}
    \label{eq:CompleteSplitting}
V \setminus \{v\}=\bigsqcup\limits_{w \in V \setminus \{v\}} \{w\},
\end{equation}
then all single-site operators $iX_w$ are contained in the reduced dynamical Lie algebra. 

The following technical lemma provides the theoretical foundation for Algorithm~\ref{alg:recursive-partition}.

\begin{lem}
    \label{lem:refinement_closure}
    Let $\Gamma = (V, E)$ be a connected acyclic graph, $v \in V$ a distinguished vertex, and $w_1, w_2 \in V$ arbitrary vertices (which may coincide with each other or with $v$). Let $\mathcal{S}_1 \subseteq \mathcal{N}_{w_1, j_1}$ and $\mathcal{S}_2 \subseteq \mathcal{N}_{w_2, j_2}$ be two subsets of $V \setminus \{v\}$ such that the operators
   \begin{equation*}
    iX_{w_1}, \quad iX_{w_2} \quad (\text{for } w_1, w_2 \neq v), \quad \mathcal{X}_{\mathcal{S}_1} := i \sum_{w \in \mathcal{S}_1} X_{w}, \quad \text{and} \quad \mathcal{X}_{\mathcal{S}_2} := i \sum_{w \in \mathcal{S}_2} X_{w}
\end{equation*}
    belong to $\mathfrak{g}^{v}_{\Gamma, \mathrm{std}}$. Then the element
    \begin{equation}
        \label{eq:intersection_element}
        \mathcal{X}_{\mathcal{S}_1 \cap \mathcal{S}_2} := i \sum_{w \in \mathcal{S}_1 \cap \mathcal{S}_2} X_{w} 
    \end{equation}
    supported on the intersection $\mathcal{S}_1 \cap \mathcal{S}_2$ is also contained in $\mathfrak{g}^{v}_{\Gamma, \mathrm{std}}$.
\end{lem}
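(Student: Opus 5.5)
The plan is to build $\mathcal{X}_{\mathcal{S}_1\cap\mathcal{S}_2}$ explicitly from nested commutators of the given elements with $\mathcal{Z}_{\widehat v}$. The cross terms are then stripped off using spectral projections of adjoint maps. Such projections stay inside the real algebra $\mathfrak{g}^{v}_{\Gamma,\mathrm{std}}$ for the following reason. For $S$ an independent set, $\mathrm{ad}_{i\sum_{w\in S}X_w}$ acts on a Pauli string by a combination of shifts. Its square $\mathrm{ad}^2$ is a real, diagonalizable operator on the span of Pauli strings with finitely many eigenvalues (of the form $-4k^2$). Hence each eigenprojection is a real polynomial in $\mathrm{ad}^2$ (Lagrange interpolation), and it preserves $\mathfrak{g}^{v}_{\Gamma,\mathrm{std}}$.

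The first observation is structural. Since $\Gamma$ is a tree, hence bipartite, two vertices in the same distance layer $\mathcal{N}_{w_i,j_i}$ are never adjacent. So $\mathcal{S}_1$ and $\mathcal{S}_2$ are independent sets, and all $X$-terms inside one of them commute with each other and interact with $H^v_P$ only through edges leaving the set. Using this, I would compute
\[
A:=\bigl[\mathcal{X}_{\mathcal{S}_2},\mathcal{Z}_{\widehat v}\bigr]\;\propto\; i\sum_{w\in\mathcal{S}_2} Y_w N_w ,
\]
where $N_w=\sum_{u\sim w,\,u\neq v}Z_u$, plus the constant $1$ when $w\sim v$. I would then bracket $A$ with $\mathcal{X}_{\mathcal{S}_1}$. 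For a vertex $w\in\mathcal{S}_1\cap\mathcal{S}_2$ the factor $Y_w$ turns into $Z_w$. Vertices $u\in\mathcal{S}_1$ that are neighbours of some $w\in\mathcal{S}_2$ produce instead $Y_wY_u$-type terms.

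Next I would apply the $\mathrm{ad}^2_{\mathcal{X}_{\mathcal{S}_2}}$ and $\mathrm{ad}^2_{\mathcal{X}_{\mathcal{S}_1}}$ eigenprojections. These keep only the Pauli strings whose $\mathcal{S}_1$- and $\mathcal{S}_2$-anticommutation counts are those of the diagonal terms $Z_wN_w$ with $w\in\mathcal{S}_1\cap\mathcal{S}_2$. A further bracket with $\mathcal{Z}_{\widehat v}$ (equivalently $H^v_P$) converts $Z_wN_w$ into $Y_w$-strings, and one more bracket with $\mathcal{X}_{\mathcal{S}_1}$ or $\mathcal{X}_{\mathcal{S}_2}$ followed by a final projection then isolates the single-site terms $X_w$. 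The individual generators $iX_{w_1},iX_{w_2}$ enter here. Via Remark~IV.12 of~\cite{TBFS} and Remark~\ref{rmk:TreesSatisfyParityAssumption}, they give the layer sums $\mathcal{X}_{w_i,j}$ for all $j$. Projections with respect to the adjacent layers $\mathcal{N}_{w_i,j_i\pm1}$ let me separate terms by their position relative to $w_i$. In particular, they let me kill the neighbour contributions $Z_u$ with $u\in\mathcal{N}_{w_i,j_i\pm1}$ that sit next to the desired $X_w$.

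The main obstacle is this last isolation step: proving that after the projections, the only surviving Pauli strings are exactly $\{X_w: w\in\mathcal{S}_1\cap\mathcal{S}_2\}$, all with the same nonzero coefficient. Uniform coefficients are needed so that the result is proportional to $\mathcal{X}_{\mathcal{S}_1\cap\mathcal{S}_2}$ rather than a weighted sum. I would first try to control this with the unique-path property. Each $w\in\mathcal{S}_1\cap\mathcal{S}_2$ has a unique neighbour toward $w_1$ and a unique neighbour toward $w_2$, and these carry the parity information, so the coefficient of $X_w$ should come out as a fixed constant, with the degree-dependent contributions cancelling mod the projections. If uniformity fails, the fallback is a finite iteration. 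Decompose the resulting weighted sum $\sum c_wX_w$ by distinct coefficient values, using $\mathrm{ad}$-projections against $\mathcal{Z}_{\widehat v}$-derived operators. Then recombine, exploiting that every block obtained is again a subset of a single distance layer, so the same argument applies recursively.
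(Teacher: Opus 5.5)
\textbf{There is a genuine gap, and it sits exactly in the isolation step you flag.}

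First, your projections strip nothing. Set $B:=[\mathcal{X}_{\mathcal{S}_1},[\mathcal{X}_{\mathcal{S}_2},\mathcal{Z}_{\widehat v}]]$. Its wanted terms are $Z_wZ_u$ and $Z_w$ with $w\in\mathcal{S}_1\cap\mathcal{S}_2$. Its unwanted terms are $Y_wY_u$ with $w\in\mathcal{S}_2\setminus\mathcal{S}_1$, $u\in\mathcal{S}_1$, $u\sim w$. Because both sets are independent, each of these terms anticommutes with exactly one $X$ from $\mathcal{S}_1$ and exactly one from $\mathcal{S}_2$. So $B$ already lies in the $-4$ eigenspace of both $\operatorname{ad}^2_{\mathcal{X}_{\mathcal{S}_1}}$ and $\operatorname{ad}^2_{\mathcal{X}_{\mathcal{S}_2}}$, and those eigenprojections leave it unchanged. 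More generally, these projections do not sort Pauli strings by anticommutation count: a string with count $2$ splits between the eigenvalues $-16$ and $0$.

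Second, your next step is false. $Z_wN_w$ and $\mathcal{Z}_{\widehat v}$ are both diagonal in the computational basis, so they commute. Bracketing with $\mathcal{Z}_{\widehat v}$ therefore kills precisely the terms you want and leaves only the images of the cross terms.

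Third, suppose you return to single sites using $C:=i\sum_{w\in\mathcal{S}_1\cap\mathcal{S}_2}Z_wN_w$. Then $\operatorname{ad}^2_{C}(iX_w)=-4iX_wN_w^2$, which carries coefficient $\deg(w)$ together with multi-site strings $X_wZ_uZ_{u'}$. This is exactly the non-uniformity you leave open. Your fallback of splitting by coefficient values presupposes an edge operator in which every target vertex lies on exactly one edge, and you never construct one.

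\textbf{The missing idea is to remove the cross terms at their source and then use only parent edges.} The offending vertices $u\in\mathcal{S}_1$ adjacent to $\mathcal{S}_2$ lie in $\mathcal{N}_{w_2,j_2\pm1}$, so the paper first slices $\mathcal{X}_{\mathcal{S}_1}$ along the distance layers of $w_2$.
\begin{itemize}
\item It uses $iX_{w_2}$, the layer sums $\mathcal{X}_{w_2,t}$, and the two-body part $\mathcal{Z}_{\widehat v,2}$ (Lemma~A.5 of~\cite{TBFS}) to build consecutive-layer edge operators $\mathcal{Z}_{\mathcal{N}_{w_2,t}}=\operatorname{ad}^2_{\mathcal{X}_{w_2,t-1}}\operatorname{ad}^2_{\mathcal{X}_{w_2,t}}(\mathcal{Z}_{\widehat v,2})$.
\item Induction on $t$ with $\operatorname{ad}^2_{\mathcal{Z}_{\mathcal{N}_{w_2,t}}}$ then peels off $\mathcal{X}_{\mathcal{S}_1\cap\mathcal{N}_{w_2,t}}$ one layer at a time.
\item It then forms $\mathcal{A}=\operatorname{ad}^2_{\mathcal{X}_{w_2,j_2-1}}\operatorname{ad}^2_{\mathcal{X}_{\mathcal{S}_1\cap\mathcal{N}_{w_2,j_2}}}(\mathcal{Z}_{\widehat v,2})$, which consists only of the parent edges $Z_{p(w)}Z_w$.
\item Every vertex of $\mathcal{N}_{w_2,j_2}$ lies on exactly one such edge, and $\mathcal{S}_2$ contains no parent vertices. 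Hence $\operatorname{ad}^2_{\mathcal{A}}(\mathcal{X}_{\mathcal{S}_2})$ is a uniform multiple of $\mathcal{X}_{\mathcal{S}_1\cap\mathcal{S}_2}$.
\end{itemize}

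Your unique-path intuition is the right one, but it has to be built into the operator: keep only the edge toward $w_2$, not all of $N_w$. The projection by $\operatorname{ad}^2_{\mathcal{X}_{w_2,j_2-1}}$ does exactly this. In fact, after slicing, your own $B$ has no cross terms, and applying $\operatorname{ad}^2_{\mathcal{X}_{w_2,j_2-1}}$ to it yields that parent-edge operator. Your eigenprojection device is sound, but it only pays off when applied to these objects.
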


\begin{proof} 

We first show that for each distance $j \ge 1$ relative to $w_2$, the sliced component
\begin{equation}
    \label{eq:sliced_component}
    \mathcal{X}_{\mathcal{S}_1 \cap \mathcal{N}_{w_2,j}} := i \sum_{w \in \mathcal{S}_1 \cap \mathcal{N}_{w_2,j}} X_{w}
\end{equation}
is contained in $\mathfrak{g}^v_{\Gamma, \mathrm{std}}$. This provides a decomposition of $\mathcal{X}_{\mathcal{S}_1}$ according to vertex distance from $w_2$. Note that if $w_1 = w_2$, all elements in $\mathcal{S}_1$ are equidistant from $w_2$ by definition (lying in $\mathcal{N}_{w_2, j_2}$), making this step immediate.

For distinct centers $w_1 \neq w_2$, we proceed by induction on the distance $t \ge 1$ from $w_2$. Recall that the operators $\mathcal{X}_{w_2, t}$ belong to $\mathfrak{g}^v_{\Gamma, \mathrm{std}}$ (see Eq.~\eqref{eq:FixedDistX2element}).

By Lemma~A.5 in~\cite{TBFS}, the element
\begin{equation*}
    \mathcal{Z}_{\widehat{v},2} := i \sum_{\substack{(w, w') \in E \\ w, w' \neq v}} Z_w Z_{w'}
\end{equation*}
belongs to $\mathfrak{g}^{v}_{\Gamma, \mathrm{std}}$.

For the base case $t=1$, we define the edge operator
\begin{equation*}
    \mathcal{Z}_{\mathcal{N}_{w_2,1}} := \operatorname{ad}^2_{iX_{w_2}} \circ \operatorname{ad}^2_{\mathcal{X}_{w_2,1}}\!\left(\mathcal{Z}_{\widehat{v},2}\right) \in \mathfrak{g}^v_{\Gamma, \mathrm{std}}.
\end{equation*}

Since $\mathcal{Z}_{\mathcal{N}_{w_2,1}}$ is supported strictly on edges incident to $w_2$ (connecting $w_2$ to its $1$-neighborhood $\mathcal{N}_{w_2, 1}$), double-commuting with $\mathcal{Z}_{\mathcal{N}_{w_2,1}}$ isolates the generator terms supported at distance $1$ from $w_2$. Specifically, evaluating $\operatorname{ad}^2_{\mathcal{Z}_{\mathcal{N}_{w_2,1}}}$ on $\mathcal{X}_{\mathcal{S}_1}$ (if $w_2 \notin \mathcal{S}_1$) or on $\mathcal{X}_{\mathcal{S}_1}-iX_{w_2}$ (if $w_2 \in \mathcal{S}_1$) isolates the $1$-neighborhood projection, yielding
\begin{equation*}
    \mathcal{X}_{\mathcal{S}_1 \cap \mathcal{N}_{w_2,1}} \propto 
    \begin{cases}
        \operatorname{ad}^2_{\mathcal{Z}_{\mathcal{N}_{w_2,1}}}\!(\mathcal{X}_{\mathcal{S}_1}), & \text{if } w_2 \notin \mathcal{S}_1, \\[6pt]
        \operatorname{ad}^2_{\mathcal{Z}_{\mathcal{N}_{w_2,1}}}\!\big(\mathcal{X}_{\mathcal{S}_1}-iX_{w_2}\big), & \text{if } w_2 \in \mathcal{S}_1,
    \end{cases}
    \quad \in \mathfrak{g}^v_{\Gamma, \mathrm{std}}.
\end{equation*}

For the inductive step $t > 1$, assume $\mathcal{X}_{\mathcal{S}_1 \cap \mathcal{N}_{w_2, s}} \in \mathfrak{g}^v_{\Gamma, \mathrm{std}}$ for all $s < t$. We construct the auxiliary element
\begin{equation*}
    \mathcal{Z}_{\mathcal{N}_{w_2,t}} := \operatorname{ad}^2_{\mathcal{X}_{w_2,t-1}} \circ \operatorname{ad}^2_{\mathcal{X}_{w_2,t}}\!\left(\mathcal{Z}_{\widehat{v},2}\right) \in \mathfrak{g}^v_{\Gamma, \mathrm{std}},
\end{equation*}
which is supported on edges of $\Gamma$ joining $\mathcal{N}_{w_2, t-1}$ to  $\mathcal{N}_{w_2, t}$. Subtracting the previously established components from $\mathcal{X}_{\mathcal{S}_1}$ yields the element 

\[
\mathcal{X}'_t := \mathcal{X}_{\mathcal{S}_1} - \sum_{s=1}^{t-1} \mathcal{X}_{\mathcal{S}_1 \cap \mathcal{N}_{w_2, s}} \in \mathfrak{g}^v_{\Gamma, \mathrm{std}}.
\]
Acting on $\mathcal{X}'_t$ with $\operatorname{ad}^2_{\mathcal{Z}_{\mathcal{N}_{w_2,t}}}$ annihilates all terms outside of  $\mathcal{N}_{w_2, t}$, establishing that
\begin{equation*}
    \mathcal{X}_{\mathcal{S}_1 \cap \mathcal{N}_{w_2,t}} \propto \operatorname{ad}^2_{\mathcal{Z}_{\mathcal{N}_{w_2,t}}}\!\left(\mathcal{X}'_t\right) \in \mathfrak{g}^v_{\Gamma, \mathrm{std}}.
\end{equation*}
By induction, $\mathcal{X}_{\mathcal{S}_1 \cap \mathcal{N}_{w_2, j}} \in \mathfrak{g}^v_{\Gamma, \mathrm{std}}$ for all $j \ge 1$.

Finally, setting $j = j_2$, we isolate the intersection $\mathcal{S}_1 \cap \mathcal{S}_2 \subseteq \mathcal{N}_{w_2, j_2}$. We define
\begin{equation*}
    \mathcal{A} := \operatorname{ad}^2_{\mathcal{X}_{w_2,j_2-1}} \circ \operatorname{ad}^2_{\mathcal{X}_{\mathcal{S}_1 \cap \mathcal{N}_{w_2,j_2}}}\!\left(\mathcal{Z}_{\widehat{v},2}\right) \in \mathfrak{g}^v_{\Gamma, \mathrm{std}}.
\end{equation*}
Since $\Gamma$ is acyclic, every vertex $w \in \mathcal{N}_{w_2, j_2}$ has a unique parent vertex $p(w)$ in $\mathcal{N}_{w_2, j_2 - 1}$. Thus, $\mathcal{A}$ is supported on edge terms $Z_{p(w)} Z_{w}$ for $w \in \mathcal{S}_1 \cap \mathcal{N}_{w_2, j_2}$. As $\mathcal{S}_2 \subseteq \mathcal{N}_{w_2, j_2}$, double commuting $\mathcal{A}$ with $\mathcal{X}_{\mathcal{S}_2}$ acts non-trivially if and only if a vertex belongs to both $\mathcal{S}_2$ and $\mathcal{S}_1 \cap \mathcal{N}_{w_2, j_2}$. The set inclusion $\mathcal{S}_2 \subseteq \mathcal{N}_{w_2, j_2}$ yields the identity
\begin{equation*}
    (\mathcal{S}_1 \cap \mathcal{N}_{w_2, j_2}) \cap \mathcal{S}_2 = \mathcal{S}_1 \cap \mathcal{S}_2.
\end{equation*}
Because $\Gamma$ is acyclic, each vertex $w \in \mathcal{N}_{w_2, j_2}$ possesses a unique  edge connecting it to $\mathcal{N}_{w_2, j_2 - 1}$. Consequently, every generator $X_w$ for $w \in \mathcal{S}_1 \cap \mathcal{S}_2$ receives a uniform non-zero scaling factor under $\operatorname{ad}^2_{\mathcal{A}}$, establishing that
\begin{equation*}
    \mathcal{X}_{\mathcal{S}_1 \cap \mathcal{S}_2} \propto \operatorname{ad}^2_{\mathcal{A}}\!\left(\mathcal{X}_{\mathcal{S}_2}\right) \in \mathfrak{g}^v_{\Gamma, \mathrm{std}}.
\end{equation*}
\end{proof}
\begin{rmk}
\label{rmk:algorithm_correctness}
Since all vertex subsets intersected throughout the execution of Algorithm~\ref{alg:recursive-partition} satisfy the hypotheses of Lemma~\ref{lem:refinement_closure}, this result establishes the correctness of the algorithm.
\end{rmk}
We conclude this part by rigorously verifying the properties of Algorithm~\ref{alg:recursive-partition}.

\begin{proof}[Proof of Theorem~\ref{thm:algorithm_certificate}]
Statement~1 is an immediate consequence of Lemma~\ref{lem:refinement_closure} when taking the intersection of $\text{\ScissorHollowRight}_{v}$ and $\text{\ScissorHollowRight}_{\text{current}}$, while Statement~2 is a special case of Statement~1. Statement~3 is established from Statement~2 as follows.

It suffices to show that every generator of $\mathfrak{g}^v_{\Gamma,\mathrm{free}}$ belongs to $\mathfrak{g}^v_{\Gamma,\mathrm{std}}$. To this end, consider an edge $(w,w') \in E$ with $w,w' \neq v$. Recalling the definition of $\mathcal{Z}_{\widehat{v}}$ from Eq.~\eqref{eq:reduced_standard}, a direct evaluation of the nested commutator yields
\begin{equation}
\label{eq:RecoverZZ}
-\frac{1}{4} \bigl[iX_{w'}, [iX_{w'}, \mathcal{Z}_{\widehat{v}}]\bigr] = 
\begin{cases}
i \sum\limits_{\substack{u \in V \setminus \{v\} \\ (u,w') \in E}} Z_u Z_{w'} + i Z_{w'}, & \text{if } w' \in \mathcal{N}_v, \\[10pt]
i \sum\limits_{\substack{u \in V \setminus \{v\} \\ (u,w') \in E}} Z_u Z_{w'}, & \text{otherwise.}
\end{cases}
\end{equation}
The two-body terms can be isolated by taking nested Lie brackets with $iX_{w'}$ and $iX_w$. Specifically, for any edge $\{w,w'\} \in E$ with $w, w' \neq v$, double-bracketing sequentially with $iX_{w'}$ and $iX_w$ yields
\begin{equation}
\label{eq:RecoverZZ2}
i Z_w Z_{w'} \propto \bigl[iX_w, \bigl[iX_w, \bigl[iX_{w'}, [iX_{w'}, \mathcal{Z}_{\widehat{v}}]\bigr]\bigr]\bigr] \in \mathfrak{g}^v_{\Gamma,\mathrm{std}}.
\end{equation}
Having established that all such two-body terms belong to $\mathfrak{g}^v_{\Gamma,\mathrm{std}}$, we can isolate the single-body terms. For any vertex $w'$ adjacent to $v$ (i.e., $w' \in \mathcal{N}_v$), Eq.~\eqref{eq:RecoverZZ} yields
\begin{equation}
\label{eq:RecoverZ}
i Z_{w'} = -\frac{1}{4} \bigl[iX_{w'}, [iX_{w'}, \mathcal{Z}_{\widehat{v}}]\bigr] - i \sum_{\substack{u \in V \setminus \{v\} \\ (u,w') \in E}} Z_u Z_{w'} \in \mathfrak{g}^v_{\Gamma,\mathrm{std}},
\end{equation}
where each product term $i Z_u Z_{w'}$ in the sum is already guaranteed to lie in $\mathfrak{g}^v_{\Gamma,\mathrm{std}}$ by Eq.~\eqref{eq:RecoverZZ2}.

Hence, every generator of $\mathfrak{g}^v_{\Gamma,\mathrm{free}}$ is contained in $\mathfrak{g}^v_{\Gamma,\mathrm{std}}$. Since the reverse inclusion $\mathfrak{g}^v_{\Gamma,\mathrm{std}} \subseteq \mathfrak{g}^v_{\Gamma,\mathrm{free}}$ is trivial, the standard reduced algebra coincides with the free one:
\begin{equation}
\label{eq:EqualityOfReducedDLAs}
\mathfrak{g}^v_{\Gamma,\mathrm{std}} = \mathfrak{g}^v_{\Gamma,\mathrm{free}}.
\end{equation}

Finally, since each vertex serves as a root at most once, at most $|V| - 1$ root vertices are processed in total. For a tree graph, both the \texttt{Partition} and \texttt{Refinement} routines run in $\mathcal{O}(|V|)$ time. Consequently, the overall worst-case time complexity is $\mathcal{O}(|V|^2)$, establishing Statement~4.
\end{proof}

\begin{rmk}
\label{rmk:geodetic_graphs}
Both Remark~\ref{rmk:TreesSatisfyParityAssumption} and Lemma~\ref{lem:refinement_closure} extend naturally to \emph{geodetic graphs}—undirected graphs in which every pair of vertices is connected by a unique shortest path. Notable families of geodetic graphs include trees, complete graphs, and odd-length cycle graphs.
\end{rmk}

\section{Invariance and Structure of Reduced Dynamical Lie Algebras}
\label{sec:appendix}

In this appendix, we provide proofs and technical derivations to demonstrate that standard reduced dynamical Lie algebras remain invariant under the action of their corresponding graph automorphism groups.

\begin{prop}
\label{prop:DLAinvariance}
Let $S_n$ denote the symmetric group acting on $n$ qubits by permuting the tensor factors. For any subgroup $G \subseteq S_n$, a permutation $\sigma \in G$ acts on the $n$-qubit Hilbert space $W = (\mathbb{C}^2)^{\otimes n}$ via the unitary operator $U_\sigma$, and induces an action on operators $A$ on $\mathcal{H}$ by conjugation:
\begin{equation}
\sigma(A) = U_\sigma A U_\sigma^\dagger.
\end{equation}

Let $H_1,\dots,H_k$ be Hermitian operators on $W$, and let
\[
\mathfrak{g} := \langle iH_1,\dots,iH_k \rangle_{\mathrm{Lie}}
\]
be the associated real dynamical Lie algebra. Suppose that each generator is $G$-invariant, i.e.,
\[
\sigma(H_j)=H_j \qquad \text{for all } \sigma\in G,\ j=1,\dots,k.
\]
Then $\mathfrak{g}$ is $G$-invariant:
\[
\sigma(A)=A \qquad \text{for all } A\in \mathfrak{g},\ \sigma\in G.
\]
\end{prop}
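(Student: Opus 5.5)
The plan is to show that the set of operators fixed by every $\sigma\in G$ is itself a real Lie subalgebra containing the generators. Since $\mathfrak{g}$ is the smallest such subalgebra, $\mathfrak{g}$ must lie inside it.

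\textbf{Step 1: the fixed set is a real Lie subalgebra.} Define
\[
\mathfrak{h}:=\{A\in\mathrm{End}(W)\mid \sigma(A)=A \text{ for all }\sigma\in G\}.
\]
Each map $A\mapsto U_\sigma A U_\sigma^\dagger$ is complex-linear, so $\mathfrak{h}$ is a (complex, hence real) linear subspace. Because $U_\sigma$ is unitary (indeed a permutation matrix), conjugation by it is an associative algebra automorphism:
\[
\sigma(AB)=U_\sigma A U_\sigma^\dagger U_\sigma B U_\sigma^\dagger=\sigma(A)\sigma(B).
\]
It follows that $\sigma([A,B])=[\sigma(A),\sigma(B)]$. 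Hence if $A,B\in\mathfrak{h}$, then $[A,B]\in\mathfrak{h}$, and $\mathfrak{h}$ is closed under the bracket.

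\textbf{Step 2: the generators lie in $\mathfrak{h}$.} By hypothesis $\sigma(H_j)=H_j$ for every $\sigma\in G$. Linearity then gives $\sigma(iH_j)=i\,\sigma(H_j)=iH_j$, so each $iH_j\in\mathfrak{h}$.

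\textbf{Step 3: conclude.} The algebra $\mathfrak{g}=\langle iH_1,\dots,iH_k\rangle_{\mathrm{Lie}}$ is the real span of all iterated brackets of the generators. Equivalently, it is the intersection of all real Lie subalgebras containing them, so $\mathfrak{g}\subseteq\mathfrak{h}$ by Steps 1 and 2. To argue concretely instead, induct on bracket depth: depth-zero elements are the generators, fixed by Step 2. If $A$ and $B$ are fixed, then $\sigma([A,B])=[A,B]$ by Step 1. Real linear combinations of fixed elements are again fixed.

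No step is a genuine obstacle. The only point needing care is that conjugation by $U_\sigma$ respects the bracket, and this follows from $U_\sigma^\dagger U_\sigma=I$. One should also note that nothing requires $U_\sigma$ to preserve the Pauli structure: the argument only uses that $\sigma$ is an automorphism of the operator algebra.
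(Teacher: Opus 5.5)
Your proof is correct and takes the same approach as the paper: conjugation by $U_\sigma$ preserves products and hence the bracket, each generator $iH_j$ is fixed, so every iterated commutator is fixed. Your fixed-point subalgebra $\mathfrak{h}$ and the remark on complex linearity of $\sigma$ just make explicit the passage to real linear combinations, which the paper leaves implicit.
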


\begin{proof}
Since $\sigma$ acts by conjugation, it preserves products and hence the Lie bracket:
\[
\sigma([A,B])=[\sigma(A),\sigma(B)].
\]
By assumption, each $iH_j$ is fixed by $G$. It follows inductively that any iterated commutator of the generators is fixed by $G$, and hence every element of $\mathfrak{g}$ is $G$-invariant.
\end{proof}

\medskip

This general observation has the following application in the context of the dynamical Lie algebras that are relevant to our paper.

\begin{lem}
\label{lem:DLAinvariance}
Let $\Gamma=(V,E)$ be a connected graph on $n$ vertices, and let $\mathrm{Aut}(\Gamma)$ denote its automorphism group, acting on $W=(\mathbb{C}^2)^{\otimes n}$ by permuting tensor factors. 
\begin{enumerate}
    \item The dynamical Lie algebra $\mathfrak{g}_{\Gamma,\mathrm{std}}$ is invariant under $\mathrm{Aut}(\Gamma)$.
    \item For any vertex $v\in V$, the standard reduced dynamical Lie algebra $\mathfrak{g}^{v}_{\Gamma,\mathrm{std}}$ is invariant under the stabilizer subgroup
\[
\mathrm{Stab}(v) := \{\sigma \in \mathrm{Aut}(\Gamma) : \sigma(v)=v\}.
\]
\end{enumerate} 
\end{lem}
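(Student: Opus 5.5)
The plan is to reduce both statements to Proposition~\ref{prop:DLAinvariance}. For each one I will check that the generators of the relevant Lie algebra are fixed under conjugation by $U_\sigma$ for every $\sigma$ in the stated group. Recall that $U_\sigma$ permutes tensor factors. Under conjugation it sends a single-site Pauli $P_w$ to $P_{\sigma(w)}$, and a product $P_wQ_{w'}$ to $P_{\sigma(w)}Q_{\sigma(w')}$. This follows directly from $U_\sigma(\psi_1\otimes\cdots\otimes\psi_n)=\psi_{\sigma^{-1}(1)}\otimes\cdots\otimes\psi_{\sigma^{-1}(n)}$, once one fixes the convention consistently.

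\textbf{Part 1.} For $\sigma\in\mathrm{Aut}(\Gamma)$ we have
\[
\sigma(H_M)=\sum_{w\in V}X_{\sigma(w)}=H_M,
\]
since $\sigma$ is a bijection of $V$. Likewise
\[
\sigma(H_P)=\sum_{(w,w')\in E}Z_{\sigma(w)}Z_{\sigma(w')}=H_P,
\]
since $\sigma$ maps $E$ bijectively onto $E$. Here I use that $Z_wZ_{w'}$ is symmetric in the two endpoints, so the orientation of the edge is irrelevant. Proposition~\ref{prop:DLAinvariance} with $G=\mathrm{Aut}(\Gamma)$ then gives the claim.

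\textbf{Part 2.} The reduced operators act on $W_v$ from Eq.~\eqref{eq:reduced_hilbert}. I identify $W_v$ with the space of $n-1$ qubits indexed by $V\setminus\{v\}$. An element $\sigma\in\mathrm{Stab}(v)$ restricts to a permutation of $V\setminus\{v\}$, so it acts on $W_v$ by a permutation unitary, and Proposition~\ref{prop:DLAinvariance} applies to this $(n-1)$-qubit action. Equivalently, $U_\sigma$ on the full space preserves the subspace with $\ket{c}$ in slot $v$, because $\sigma(v)=v$.

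Next I check the generators in Eq.~\eqref{eq:reduced_hams}.
\begin{itemize}
\item $H_M^v$ is invariant because $\sigma$ permutes $V\setminus\{v\}$.
\item The two-body part of $H_P^v$ is invariant because $\sigma$ permutes the set of edges not incident to $v$. An automorphism fixing $v$ maps edges avoiding $v$ to edges avoiding $v$.
\item The one-body part $\sum_{(v,w)\in E}Z_w$ is invariant because $\sigma$ permutes the neighborhood $\mathcal{N}_{v,1}$. Indeed, $(v,w)\in E$ iff $(\sigma(v),\sigma(w))=(v,\sigma(w))\in E$.
\end{itemize}
Hence $\mathcal{X}_{\widehat v}$ and $\mathcal{Z}_{\widehat v}$ are $\mathrm{Stab}(v)$-invariant, and the proposition yields invariance of $\mathfrak{g}^v_{\Gamma,\mathrm{std}}$.

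The main point requiring care is Part~2. The reduced space is not literally $(\mathbb{C}^2)^{\otimes n}$, so one must justify that $\mathrm{Stab}(v)$ has a well-defined permutation action on it that is compatible with the reduction. Arbitrary automorphisms would move the fixed site $v$ and fail to preserve $W_v$; only $\mathrm{Stab}(v)$ preserves it, which is exactly why the stabilizer appears. Once this identification is made explicit, everything else is bookkeeping. Note also that Proposition~\ref{prop:DLAinvariance} is stated for subgroups of $S_n$ acting on $n$ qubits; I apply it with $n$ replaced by $n-1$ and $G$ taken to be the image of $\mathrm{Stab}(v)$ in $\mathrm{Sym}(V\setminus\{v\})$.
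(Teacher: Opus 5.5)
Your proposal is correct and follows the paper's proof: both reduce the lemma to Proposition~\ref{prop:DLAinvariance} by checking that the generators are fixed by $\mathrm{Aut}(\Gamma)$, respectively by $\mathrm{Stab}(v)$. You additionally spell out the term-by-term invariance of $H_M$, $H_P$, $H_M^v$, $H_P^v$, and you justify that $\mathrm{Stab}(v)$ acts compatibly on the reduced space $W_v$; the paper leaves both points implicit.
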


\begin{proof}
Both generators of $\mathfrak{g}_{\Gamma,\mathrm{std}}$ are invariant under graph automorphisms. Hence they are fixed by $\mathrm{Aut}(\Gamma)$, and the first claim follows from Proposition~\ref{prop:DLAinvariance}.

Similarly, the generators of the reduced algebra $\mathfrak{g}^{v}_{\Gamma,\mathrm{std}}$ are invariant under automorphisms that fix $v$. Thus they are fixed by $\mathrm{Stab}(v)$, and the second claim again follows from Proposition~\ref{prop:DLAinvariance}.
\end{proof}

\begin{cor}
\label{cor:DLAorbit}
Let $\Gamma$ be a connected graph, and suppose that $\mathrm{Aut}(\Gamma)$ is nontrivial. Then the standard dynamical Lie algebra $\mathfrak{g}_{\Gamma,\mathrm{std}}$ cannot contain all single-site operators $iX_w$, $w\in V$.

In particular, the inclusion
\[
\mathfrak{g}_{\Gamma,\mathrm{std}} \subset \mathfrak{g}_{\Gamma,\mathrm{free}}
\]
is strict.

Moreover, for any vertex $v\in V$, if the stabilizer subgroup
\[
\mathrm{Stab}(v) := \{\sigma \in \mathrm{Aut}(\Gamma) : \sigma(v)=v\}
\]
is nontrivial, then the reduced dynamical Lie algebra $\mathfrak{g}^{v}_{\Gamma,\mathrm{std}}$ cannot contain all single-site operators $iX_w$ with $w\neq v$.
\end{cor}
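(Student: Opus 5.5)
The plan is to argue by contradiction, using the invariance result of Lemma~\ref{lem:DLAinvariance}. Suppose $\mathrm{Aut}(\Gamma)$ is nontrivial and take $\sigma\neq\mathrm{id}$. Since $\sigma$ is a nontrivial permutation of $V$, there is a vertex $w$ with $\sigma(w)=w'\neq w$. If $iX_w\in\mathfrak{g}_{\Gamma,\mathrm{std}}$, then part~1 of Lemma~\ref{lem:DLAinvariance} gives $\sigma(iX_w)=iX_w$.

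We also need the direct computation of the conjugation action. $U_\sigma$ permutes the tensor factors, so $U_\sigma X_w U_\sigma^\dagger=X_{\sigma(w)}=X_{w'}$ (the index convention, $\sigma$ versus $\sigma^{-1}$, does not matter: either way the image is a single-site $X$ on a different site). Hence $iX_{w'}=iX_w$. This is impossible, because distinct Pauli strings are linearly independent; for instance, they are orthogonal in the Hilbert--Schmidt inner product. So not every $iX_w$ lies in $\mathfrak{g}_{\Gamma,\mathrm{std}}$.

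The strictness of the inclusion $\mathfrak{g}_{\Gamma,\mathrm{std}}\subset\mathfrak{g}_{\Gamma,\mathrm{free}}$ then follows immediately, since every $iX_w$ is a generator of $\mathfrak{g}_{\Gamma,\mathrm{free}}$. Together with the inclusion noted in the Remark after the reduced definitions, this gives a proper containment.

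The reduced statement is proved the same way. Take $\sigma\in\mathrm{Stab}(v)$ nontrivial, and choose $w$ with $\sigma(w)\neq w$. Since $\sigma(v)=v$, necessarily $w\neq v$, and also $\sigma(w)\neq v$, so both sites lie in the reduced register. Part~2 of Lemma~\ref{lem:DLAinvariance} then yields the same contradiction $iX_w=iX_{\sigma(w)}$ inside operators on $W_v$.

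One point needs checking: the permutation action must be well defined on the reduced space. Because $\sigma$ fixes the site $v$, $U_\sigma$ preserves the subspace with the $v$-th factor equal to $\mathbb{C}\ket{c}$, so conjugation restricts to operators on $W_v$. This compatibility of the action with the reduction is the main thing to verify carefully. Everything else is routine.
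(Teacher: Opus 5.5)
Your proposal is correct and follows the paper's proof: Lemma~\ref{lem:DLAinvariance} shows that every element of $\mathfrak{g}_{\Gamma,\mathrm{std}}$ (resp.\ $\mathfrak{g}^{v}_{\Gamma,\mathrm{std}}$) is fixed by a nontrivial $\sigma\in\mathrm{Aut}(\Gamma)$ (resp.\ $\mathrm{Stab}(v)$), which contradicts $\sigma(iX_w)=iX_{\sigma(w)}\neq iX_w$, and strictness of the inclusion follows because $\mathfrak{g}_{\Gamma,\mathrm{free}}$ contains every $iX_w$. Your extra checks are details the paper leaves implicit and do not change the argument: that $\sigma(w)\neq v$, that Pauli strings are linearly independent, and that $U_\sigma$ preserves the reduced space when $\sigma$ fixes $v$.
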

\begin{proof}
By Lemma~\ref{lem:DLAinvariance}, the Lie algebra $\mathfrak{g}_{\Gamma,\mathrm{std}}$ is invariant under $\mathrm{Aut}(\Gamma)$.

Since $\mathrm{Aut}(\Gamma)$ is nontrivial, there exists $\sigma \in \mathrm{Aut}(\Gamma)$ and a vertex $w\in V$ such that $\sigma(w)=w'\neq w$. Then
\[
\sigma(iX_w)=iX_{w'} \neq iX_w,
\]
which contradicts the invariance condition. Hence $iX_w \notin \mathfrak{g}_{\Gamma,\mathrm{std}}$.

The strict inclusion follows since $\mathfrak{g}_{\Gamma,\mathrm{free}}$ contains all single-site Pauli-$X$ terms. The argument for $\mathfrak{g}^{v}_{\Gamma,\mathrm{std}}$ is identical, replacing $\mathrm{Aut}(\Gamma)$ with $\mathrm{Stab}(v)$.

\end{proof}

\end{document}